\newcommand{\half}{\frac 12}
\newcommand{\norm}[2]{\| #1 \|_{ #2 }}
\newcommand{\vnorm}[1]{\norm{ #1 }{V}}
\newcommand{\ltwonorm}[1]{\norm{ #1 }{2}}
\newcommand{\diff}[1]{\text{d} #1}
\newcommand{\Rd}{\mathbb{R}^{d}}
\newcommand{\RdM}{\mathbb{R}^{d\times M}}
\newcommand{\RdK}{\mathbb{R}^{d\times K}}
\newcommand{\nuinf}{\nu_\text{inf}}
\begin{document}
\title{Selective metamorphosis for growth modelling with applications to landmarks}
\titlerunning{Selective metamorphosis for growth modelling}

\author{Andreas Bock \and  Alexis Arnaudon \and Colin Cotter}

\authorrunning{Bock et al.}
%
\institute{Imperial College London}
%
\maketitle              

\begin{abstract}
We present a framework for shape matching in computational anatomy allowing
users control of the degree to which the matching is diffeomorphic. This control
is given as a function defined over the image and parameterises the template
deformation. By modelling localised template deformation we have a mathematical
description of growth only in specified parts of an image. The location can
either be specified from prior knowledge of the growth location or learned from
data. For simplicity, we consider landmark matching and infer the distribution of
a finite dimensional parameterisation of the control via Markov chain Monte
Carlo.  Preliminary numerical results are shown and future paths of
investigation are laid out. Well-posedness of this new problem is studied
together with an analysis of the associated geodesic equations. 
\keywords{LDDMM \and Computational anatomy \and Metamorphosis \and MCMC.}
\end{abstract}

\section{Introduction}

In computational anatomy
\cite{grenander1994representations,grenander1998computational} one of the most
fundamental problems is to continuously deform an image or shape into another
and thereby obtain a natural notion of distance between them as the energy
required for such a deformation. Common methods to compute image deformations
are based on diffeomorphic deformations which assume that the images are
continuously deformed into one another with the additional property that the
inverse deformation is also continuous.  This is a strong requirement for images
which implies that the 'mass' of any part of the image is conserved: we cannot
create or close 'holes'.  This is also a crucial property in fluid mechanics and
in fact the theory of diffeomorphic matching carrying the moniker \emph{Large
Deformation Diffeomorphic Metric Mapping} (LDDMM)
\cite{trouve1998diffeomorphisms,beg2005computing} has been inspired by fluid
mechanics. Indeed, Arnold \cite{arnold1966geometrie} made the central
observation that the geodesic equations for the diffeomorphism group induced by
divergence-free vector fields corresponded to that of incompressible flows. If a
strictly diffeomorphic matching is not possible or necessary, an extension of
LDDMM called metamorphosis \cite{trouve2005metamorphoses,holm2009euler} is
available which introduces a parameter $\sigma^2$ parameterising the deviation
from diffeomorphic matching allowing for topological variations e.g. growth via
image intensity. In particular, if $\sigma^2=0$ the deformation is purely
diffeomorphic as in LDDMM.  See
\cite{trouve1995infinite,trouve2005local,miller2001group} for technical details
pertaining to the construction of the metamorphosis problem. While diffeomorphic
paths always exist for landmark problems \cite{guo2006diffeomorphic} this
theory allows one to match images of shapes with different topological features,
which is ill-conditioned for standard LDDMM. Indeed, even inexact matching in
LDDMM for such problems yields large energies and spurious geodesics that do not
contribute to an intuitive matching, see figure \ref{fig:mm_lddmm}. As observed
here, introducing $\sigma^2>0$ regularises the problem and qualitative improves
the matching.\\

In this work, we modify metamorphosis to include a spatially dependent control
parameter $x\mapsto\nu(x)$ in order to selectively allow non-diffeomorphic
(\emph{metamorphic}) matching in parts of the image. For $\nu(\cdot) = \sigma^2$
our theory recovers the standard metamorphosis model. However, with a localised
control (e.g. a Gaussian centred at a point in $\Rd$), we can selectively
introduce metamorphosis in an image and model local topological effects such as
growth phenomena. The difficulty of this problem is to infer the function
$\nu(\cdot)$ without prior knowledge of the location of the topological effects.
This problem is similar to the one treated in \cite{arnaudon_geometric_2017},
where such functions were parameterising the randomness in LDDMM matching of
shapes.  We will use a Markov chain Monte Carlo (MCMC) approach to infer
appropriate functions $\nu(\cdot)$, such that the topological effects are well
described and a large part of the matching remains diffeomorphic. In this paper,
we focus on landmark matching but aim to extend the theory of selective
metamorphosis to data structures amenable to classical metamorphosis or LDDMM
theory can handle.

\begin{figure}
\centering
\begin{minipage}{\textwidth}
  \centering
  \subfigure{\includegraphics[width=.3\textwidth]{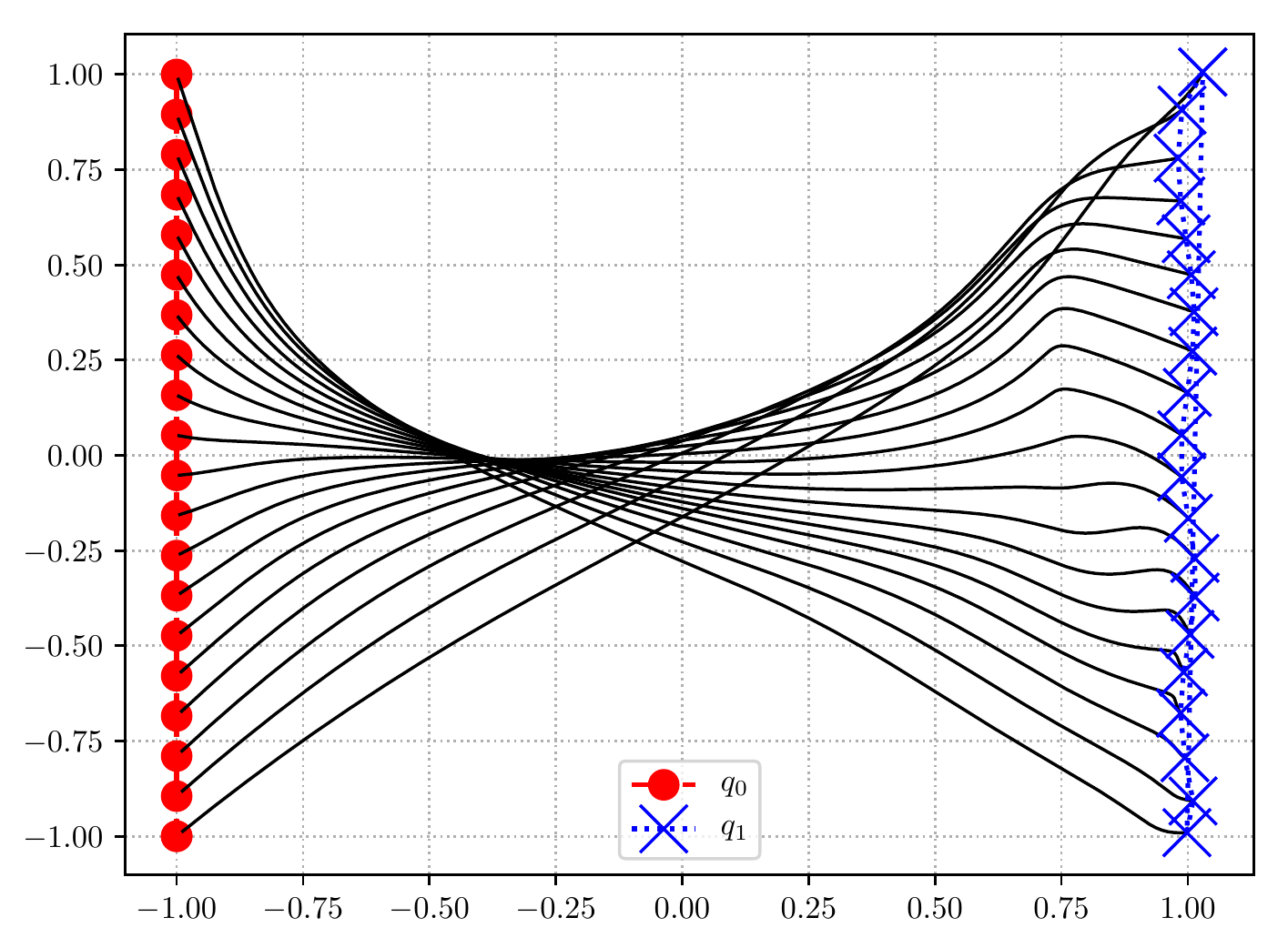}}
  \subfigure{\includegraphics[width=.3\textwidth]{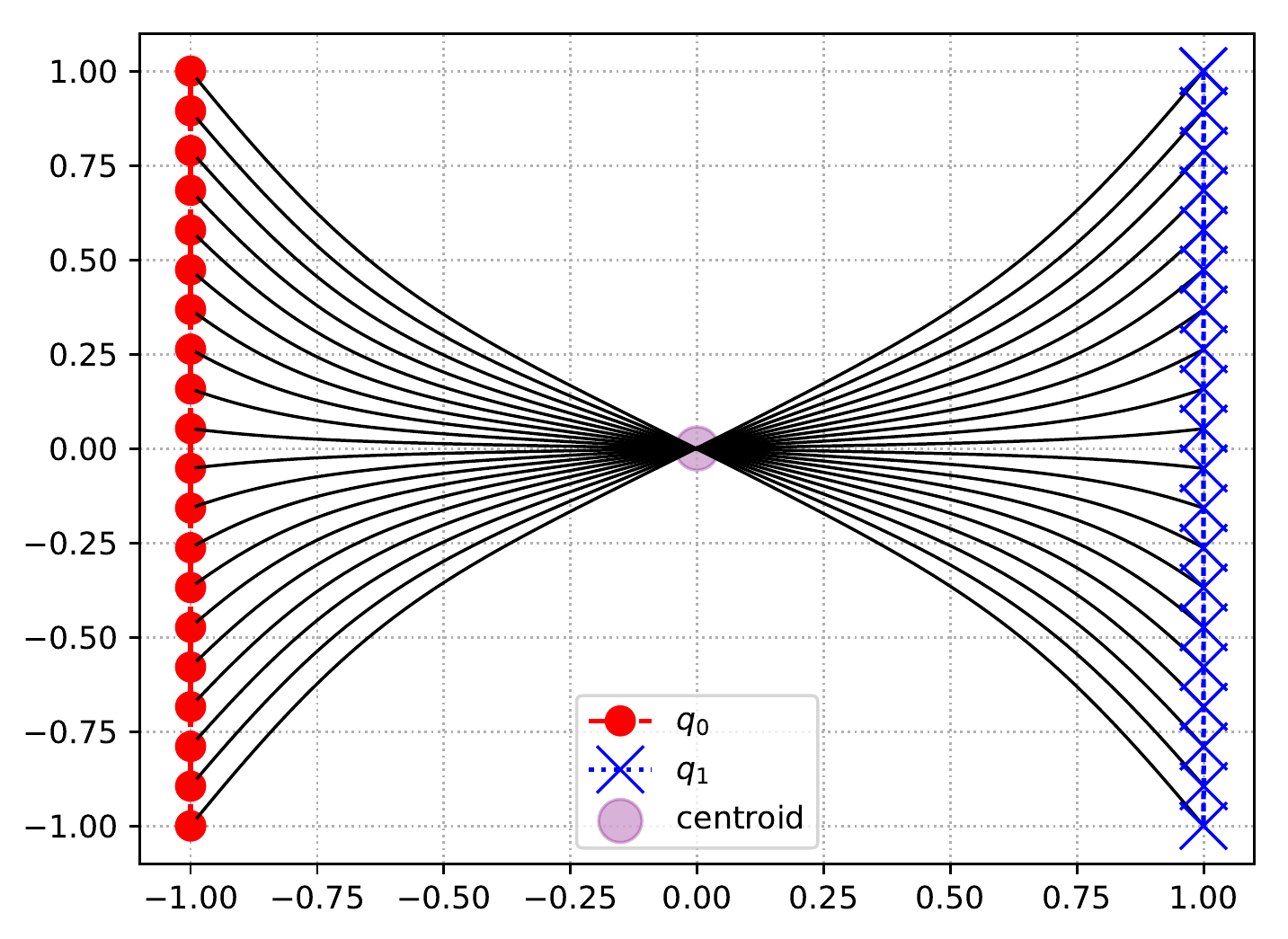}}
  \subfigure{\includegraphics[width=.3\textwidth]{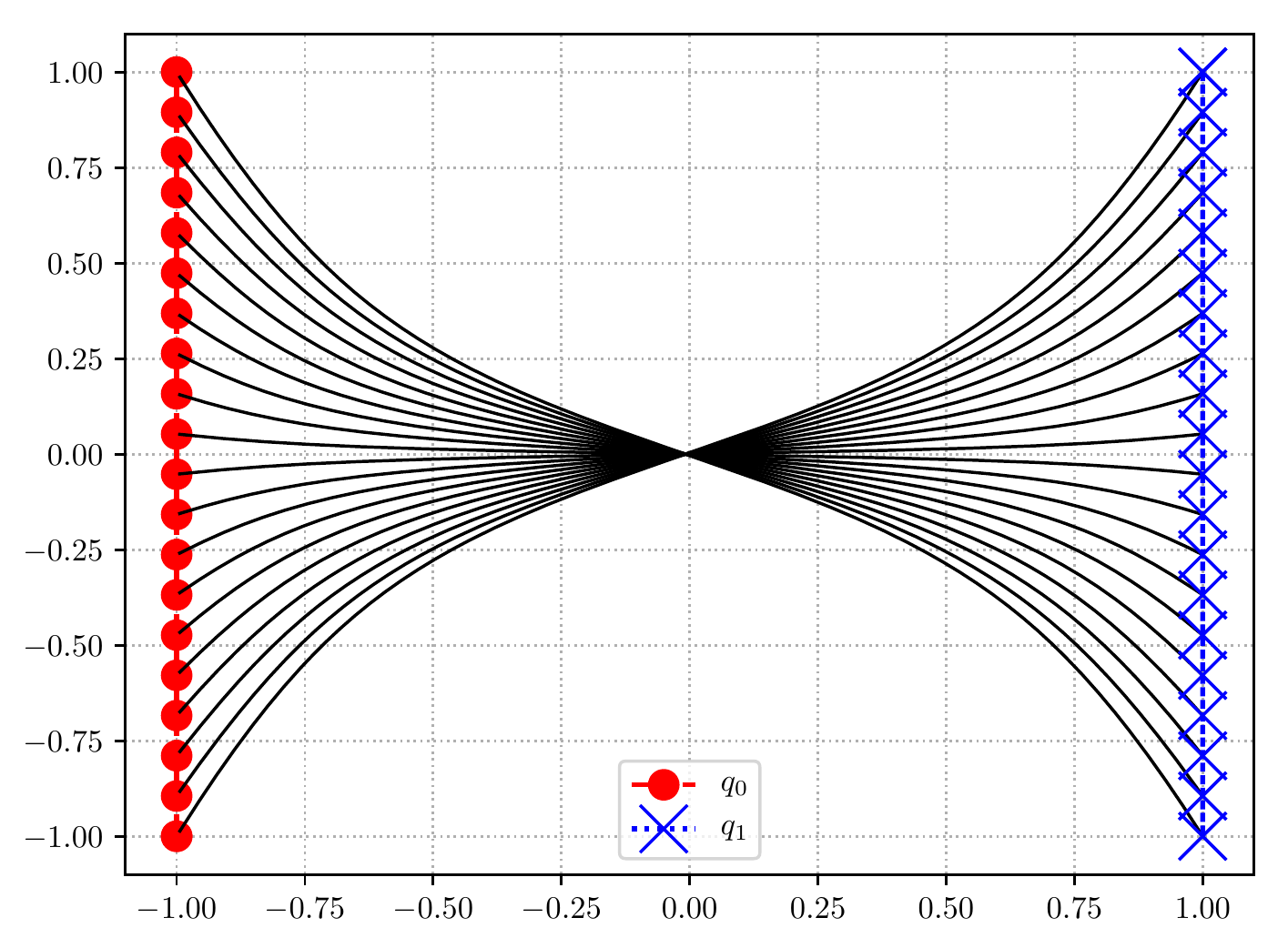}}
  \subfigure{\includegraphics[width=.3\textwidth]{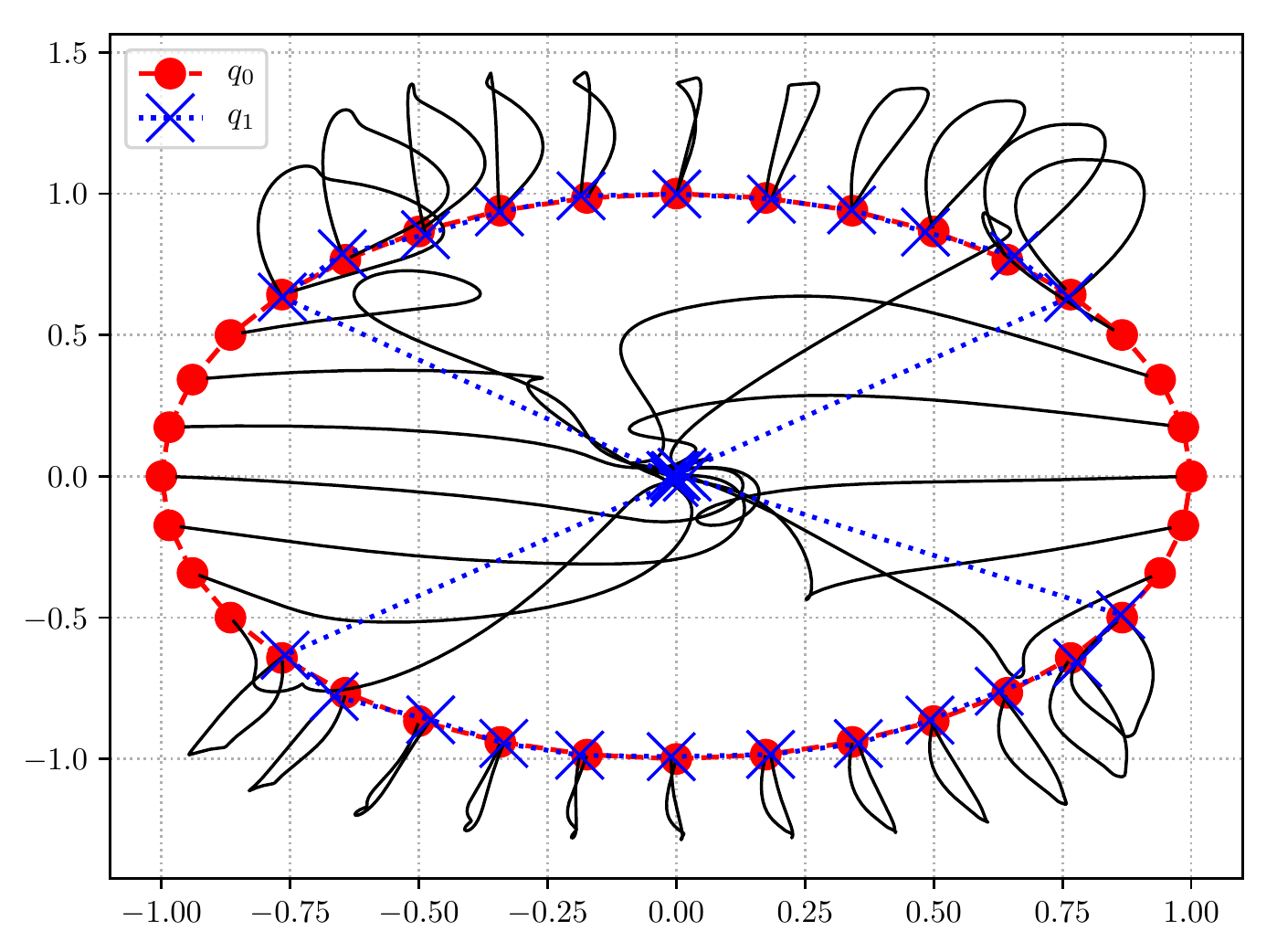}}
  \subfigure{\includegraphics[width=.3\textwidth]{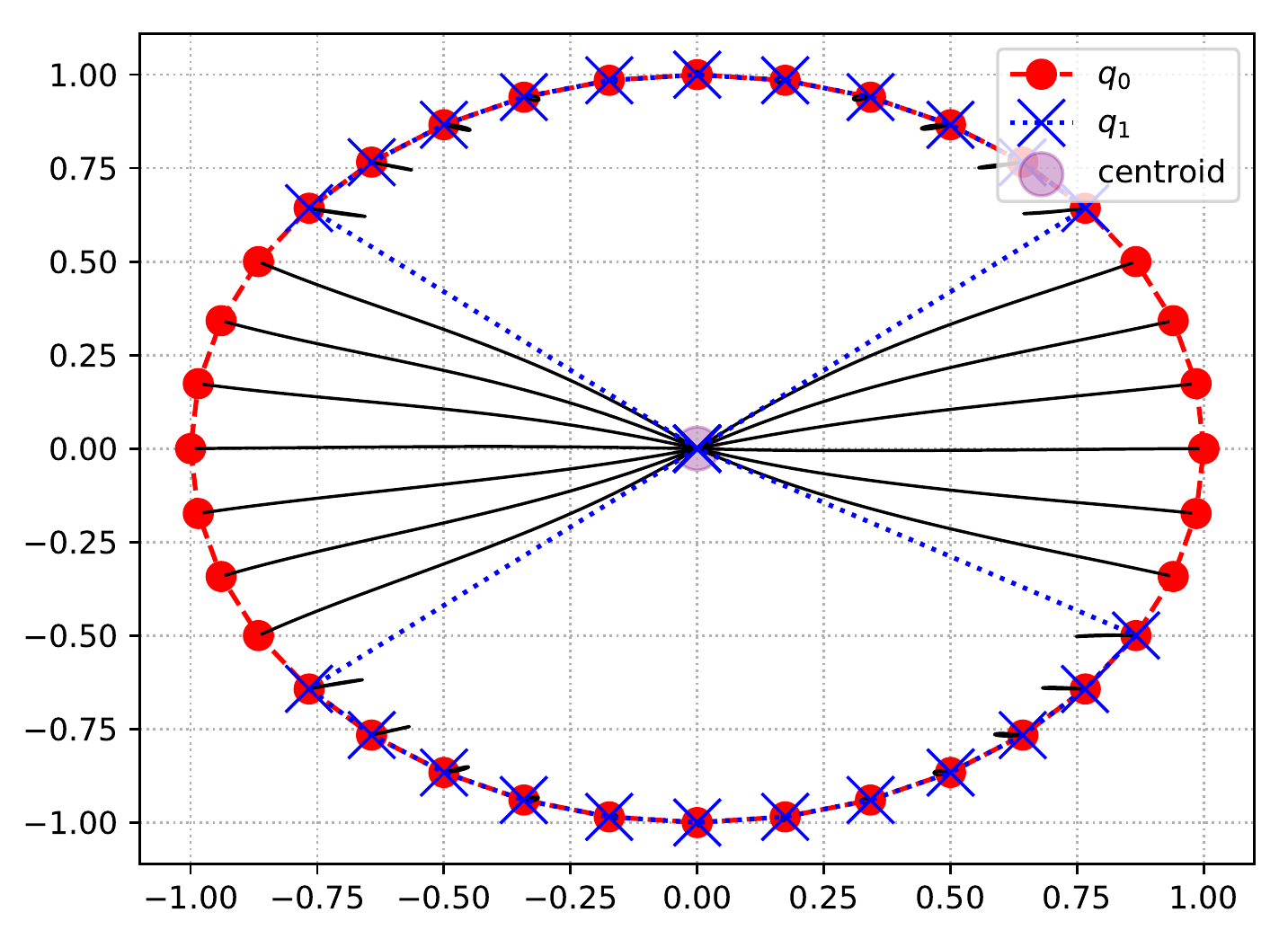}}
  \subfigure{\includegraphics[width=.3\textwidth]{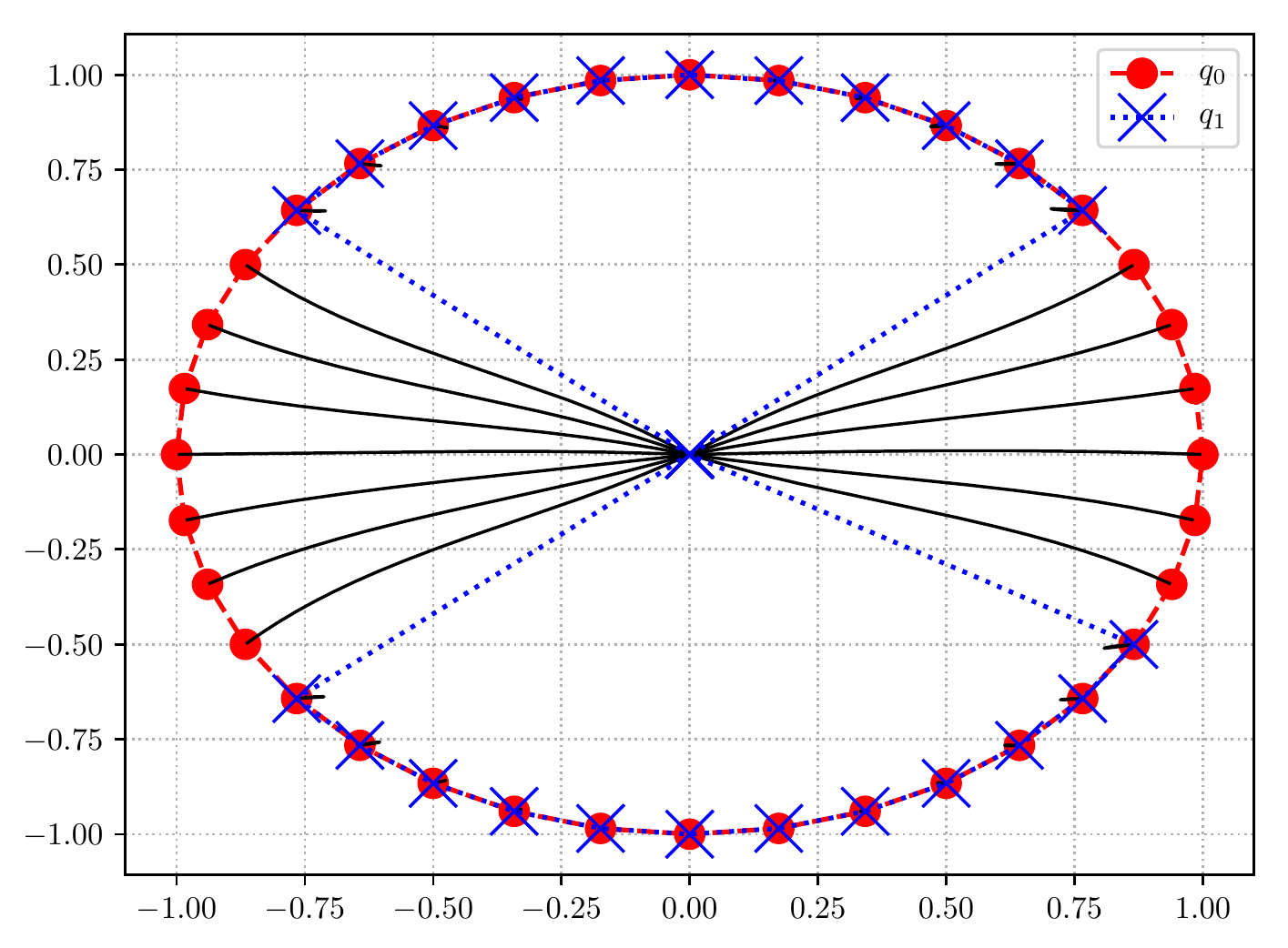}}
    \caption{This figure illustrates landmark matching with classical LDDMM
    (left column), metamorphosis (right column) and our selective metamorphosis
    approach (middle column). LDDMM fails to perform the matching and we observe
    unnatural landmark trajectories whereas metamorphosis achieves a more
    intuitive matching. Selective metamorphosis has the additional advantage of
    only breaking the diffeomorphic property where needed in along the matching,
    thus preserving more of the desired diffeomorphic property of the matching.
    These simulations where done for landmarks with Gaussian kernel of variance 
  $0.5$, $100$ timesteps from $t=0$ to $t=1$, and a metamorphosis kernel 
of variance $0.2$.}
    \label{fig:mm_lddmm}
\end{minipage}
\end{figure}

{\bf Structure} This paper is organised as follows. We review the theory of
classical metamorphosis in section \ref{sec:bg} and extend it to selective
metamorphosis in section \ref{sec:select_mm}.  We then introduce a Bayesian
framework for inferring the metamorphic control parameter $\nu(\cdot)$ in
section \ref{sec:bayesian} and apply this theory to a few landmark examples in
section \ref{sec:numerical}. Section \label{sec:outlook} contains concluding
remarks.

\section{Metamorphosis for Landmarks}\label{sec:bg}

In this paper we are concerned with diffeomorphometric approaches to image and
shape matching. To this end, we use time-dependent velocity fields $u_t$ occupying
some Hilbert space $u_t \in V$, where $V$ is continuously embedded in
$\textsf{C}_0^k(\Rd)$, $k\geq 1$ inducing a curve $\varphi_t$ on a subgroup
$\text{Diff}_V(\Rd)$ of diffeomorphisms
\cite{arnold1966geometrie,younes2010shapes} via the following ordinary
differential equation
\begin{align}
& \dot{\varphi_t} = u_t \circ \varphi_t\, , \qquad  \varphi_0 = \text{id}\, . 
  \label{diffeo}
\end{align}
This is often used in a minimisation problem where the objective is to match two
images $I_0$ and $I_1$:
\begin{align}
  S(u) = \int_0^1 \half\vnorm{u_t}^2 \diff{t} + \frac{1}{2\lambda^2}
  F(I_0\circ\varphi^{-1}_1, I_1)\longrightarrow \text{min. subject to \eqref{diffeo}}\, , \label{E-def}
\end{align}
where $F$ denotes a similarity measure between the deformed initial image
$I_0\circ \varphi_1$ and the target image $I_1$ to allow inexact matching
parameterised by $\lambda^2$. The LDDMM approach takes $F$ as an $L^2$ norm of
the difference between its arguments. In order to simplify the exposition, we
will consider singular solutions of this problem, which are given by the Ansatz
\begin{align}
  \frac{\delta l}{\delta u} = \mathbf m(x) = \sum_{i=1}^M p_t^i \delta(x-q_t^i)\,, 
\end{align}
for $M$ landmarks with position $q_t^i\in \Rd$ and momenta $p_t^i \in \Rd$.
The vector field is thus 
\begin{align}
  u_t(x) = \sum_{i=1}^M p_t^i K(x-q_t^i)\,, 
  \label{u-def}
\end{align}
where $K:\mathbb R^d\times \mathbb R^d\to \mathbb R$ is the kernel associated to
the norm $\|\cdot \|_V$. This parameterisation holds throughout this paper and
we set $d=2$. For metamorphosis, we introduce a discrete template variable
$\boldsymbol \eta_t$ such that the deformation of a set of landmarks is written
as the composition of the template position and deformation as
\begin{align}
  \mathbf q_t = \varphi_t \boldsymbol \eta_t\, . 
  \label{q_t}
\end{align}
Then, we can define the template velocity as 
\begin{align}
  \mathbf z = \varphi_t \dot {\boldsymbol \eta}
  \label{z_eta}
\end{align}
and extend the action functional \eqref{E-def} to 
\begin{align}
  \begin{split}
    S_m(\mathbf q_t, \mathbf p_t, \mathbf z_t) = & \int_0^1
    \half  \left (\vnorm{u_t}^2 + \frac{1}{\sigma^2} \sum_{i=1}^M |z_t^i|^2\right )\diff{t}\, , 
  \end{split}
  \label{E_m-def}
\end{align}
where now the reconstruction relation is 
\begin{align}
    \dot{\mathbf q_t} = u_t (\mathbf q_t) + \mathbf z_t\, , 
    \label{dq-m}
\end{align}
obtained from \eqref{z_eta} and \eqref{q_t} together with  $u= \dot \varphi_t \varphi^{-1}$, 
see \cite{holm2009euler} for more details.  

By taking variations carefully, see again \cite{holm2009euler}, we directly obtain a relationship
between the momentum variable $\mathbf p$ and the template variable $\mathbf z$ as
\begin{align}
  \mathbf m(x) = \frac{1}{\sigma^2} \sum_{i=1}^M z_t^i\delta(x-q_i)\qquad \Rightarrow \qquad
  \mathbf z = \sigma^2 \mathbf p_i\, , 
\end{align}
and the equation of motions are
\begin{align}
  \begin{split}
  \dot{\mathbf p_t} &= - \nabla u_t(\mathbf q_t)^T \mathbf p_t\\ 
  \dot{\mathbf q_t} &= u_t(\mathbf q_t) +  \sigma^2\mathbf p_t \,,
  \end{split}
  \label{eq-m-classic}
\end{align}
where $u_t$ is defined in \eqref{u-def}. 


\section{Selective Metamorphosis for Landmarks}\label{sec:select_mm}

We can now extend the metamorphosis setting to be able to locally control the
amount of non-diffeomorphic evolution.  For this, we introduce a function $\nu:
\mathbb R^2\to \mathbb R$ replacing the parameter $\sigma^2$ such that
$\nu(x)=\sigma^2$ corresponds to the classic landmark metamorphosis. The action 
for selective metamorphosis thus becomes 
\begin{align}
  \begin{split}
    S_{sm}^\nu(\mathbf q_t, u_t, \mathbf z_t) = & \int_0^1
    \half  \left (\vnorm{u_t}^2 +\sum_{i=1}^M \frac{1}{\nu(q_t^i)}|z_t^i|^2\right )\diff{t}\, , 
  \end{split}
  \label{E_sm-def}
\end{align}
which we minimise subject to the reconstruction equation \eqref{dq-m} and the
boundary conditions $\mathbf q_0$ and $\mathbf q_1$ at time $t=0,\,1$. In the
case of landmarks we have as before that
\begin{align}\label{zp_relation}
  \mathbf m(x) = \sum_{i=1}^M \frac{1}{\nu(q_t^i)} z_t^i\delta(x-q_i)\qquad \Rightarrow \qquad
  z_t^i = \nu(q_t^i) p_t^i\quad \forall i\, , 
\end{align}
so we can eliminate the template variable $\mathbf z_t$ and write  
\begin{align}
  \begin{split}
    S_{sm}^\nu(\mathbf q_t, u_t, \mathbf p_t) = & \int_0^1
    \half  \left (\|u_t\|_V^2  +\sum_{i=1}^M \nu(q_t^i)|p_t^i|^2\right )\diff{t}\, . 
  \end{split}
  \label{E_sm-def_p}
\end{align}

The problem defined by \eqref{E_sm-def_p} yields the following equations for
selective metamorphosis for landmarks:
\begin{align}
  \begin{split}
  \dot{\mathbf p}_t &= - \nabla u_t(\mathbf q_t)^T \mathbf p_t - \frac12
  \nabla \nu(\mathbf q_t ) |\mathbf p_t|^2\\ \dot{\mathbf q}_t &= u_t(\mathbf q_t) +
  \nu(\mathbf q_t)\mathbf p_t \,,
\end{split}
  \label{eq-m-selective}
\end{align}
with $\mathbf q_0,\, \mathbf q_1 \text{ fixed}$. Again, the velocity is fully
described by $\mathbf p$ and $\mathbf q$ via \eqref{u-def}. The landmark
dynamics follow standard LDDMM trajectories as $\nu(x)$ vanishes in parts of the
domain.  This can also be seen in the relation \eqref{zp_relation}, where
$\nu(x)=0 \Rightarrow \mathbf z_t=0$ implies that the template variable remains
fixed. Notice that these equations are Hamilton's equations for the Hamiltonian
$\mathbf p_t$ and $\mathbf q_t$
\begin{align}
  h(\mathbf q, \mathbf p) = h_l(\mathbf q_t,\mathbf p_t)+\frac12\sum_i\nu(q_i) |p_i|^2\,,  
\end{align}
where we have used \eqref{u-def} for the standard landmark Hamiltonian
\begin{align}
  h_l(\mathbf q, \mathbf p) =  \frac12 \sum_{i,j=1}^M K(q_i-q_j)p_i\cdot p_j\, . 
\label{hamiltonian}
\end{align}
A practical procedure for solving \eqref{eq-m-selective} with the landmark
Hamiltonian above is called \emph{shooting}, where we replace the end-point
condition $\mathbf q_1$ with a guess for $\mathbf p_0$, and iteratively update
$\mathbf p_0$ using automatically computed adjoint (or backward) equations
until $\mathbf q_1$ compares to $\mathbf q(1)$ below a certain tolerance. We
will perform this procedure directly with an automatic differentiation package
Theano \cite{team2016theano}, see
\cite{kuhnel2017computational,kuhnel2017differential} for more details on the
implementation. This section concludes with some theoretical results.
\begin{theorem}\label{sm-eu}
Let $\nu$ be bounded from below away from zero by $\nuinf \in \mathbb R$ and
from above by $0<\sigma^2\in\mathbb R$. Then there exists a minimiser of
\eqref{E_sm-def_p} admissible to \eqref{dq-m}.
\end{theorem}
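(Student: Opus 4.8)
\noindent
The plan is to argue by the direct method of the calculus of variations, treating $u\in L^2([0,1],V)$ and the momenta $\mathbf p\in L^2([0,1],\RdM)$ as controls and the configuration $\mathbf q$ (with $\mathbf q_0,\mathbf q_1$ fixed) as the state obtained from the reconstruction relation \eqref{dq-m} with $\mathbf z_t^i=\nu(q_t^i)p_t^i$. First I would check that the infimum is finite: since $V\hookrightarrow\textsf{C}_0^1(\Rd)$ the flow \eqref{dq-m} is well posed, and because the template contribution $\mathbf z$ is not constrained by the kernel the landmark system is controllable, so the admissible set joining $\mathbf q_0$ to $\mathbf q_1$ is nonempty and carries finite energy (cf.\ \cite{guo2006diffeomorphic,younes2010shapes}). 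As the functional is nonnegative, the infimum is finite; fix a minimising sequence $(u^n,\mathbf p^n,\mathbf q^n)$ with $S_{sm}^\nu(u^n,\mathbf p^n,\mathbf q^n)\le C$.

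\noindent
Next I would extract the a priori bounds on which everything rests. The kinetic term $\int_0^1\vnorm{u^n_t}^2\,\diff{t}$ is bounded, so $u^n$ is bounded in $L^2([0,1],V)$. Here the hypothesis $\nu\ge\nuinf>0$ is essential: it converts the weighted term $\int_0^1\sum_i\nu(q_t^{n,i})|p_t^{n,i}|^2\,\diff{t}$ into genuine coercive control of $\int_0^1|\mathbf p^n_t|^2\,\diff{t}$, bounding $\mathbf p^n$ in $L^2([0,1],\RdM)$. Using the embedding $V\hookrightarrow\textsf{C}_0^1(\Rd)$ to estimate $|u^n_t(\mathbf q^n_t)|\le c\,\vnorm{u^n_t}$ and the upper bound $\nu\le\sigma^2$ to estimate $|\nu(\mathbf q^n_t)\mathbf p^n_t|\le\sigma^2|\mathbf p^n_t|$, equation \eqref{dq-m} shows $\dot{\mathbf q}^n$ is bounded in $L^2$; with the fixed initial value this bounds $\mathbf q^n$ in $H^1([0,1],\RdM)$. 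Passing to a subsequence I obtain $u^n\rightharpoonup u$ in $L^2([0,1],V)$, $\mathbf p^n\rightharpoonup\mathbf p$ in $L^2([0,1],\RdM)$, $\mathbf q^n\rightharpoonup\mathbf q$ in $H^1$, and, by the compact embedding $H^1([0,1],\RdM)\hookrightarrow C^0$, $\mathbf q^n\to\mathbf q$ uniformly.

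\noindent
The crux, which I expect to be the \emph{main obstacle}, is passing to the limit in the nonlinear reconstruction constraint, since $u^n$ converges only weakly yet is evaluated along the moving points $\mathbf q^n$. I would split $u^n_t(\mathbf q^n_t)-u_t(\mathbf q_t)=[u^n_t(\mathbf q^n_t)-u^n_t(\mathbf q_t)]+[u^n_t(\mathbf q_t)-u_t(\mathbf q_t)]$. The first bracket is controlled strongly: the $\textsf{C}_0^1$ bound gives $|u^n_t(\mathbf q^n_t)-u^n_t(\mathbf q_t)|\le c\,\vnorm{u^n_t}\,\|\mathbf q^n-\mathbf q\|_\infty$, which tends to $0$ in $L^2$ because $\|u^n\|_{L^2([0,1],V)}$ is bounded and $\mathbf q^n\to\mathbf q$ uniformly. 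For the second bracket, along the \emph{fixed} limit path $\mathbf q$ the evaluation map $w\mapsto(t\mapsto(w_t(q_t^1),\dots,w_t(q_t^M)))$ is bounded linear from $L^2([0,1],V)$ to $L^2([0,1],\RdM)$, hence weak--weak continuous, so $u^n(\mathbf q)\rightharpoonup u(\mathbf q)$. Assuming $\nu$ continuous (as for the localised Gaussian controls of interest), $\nu(\mathbf q^n)\to\nu(\mathbf q)$ uniformly, and the strong--weak product gives $\nu(\mathbf q^n)\mathbf p^n\rightharpoonup\nu(\mathbf q)\mathbf p$. Therefore $\dot{\mathbf q}^n\rightharpoonup u(\mathbf q)+\nu(\mathbf q)\mathbf p$, which by uniqueness of weak limits (against $\dot{\mathbf q}^n\rightharpoonup\dot{\mathbf q}$) equals $\dot{\mathbf q}$; the endpoint conditions survive by uniform convergence, so $(u,\mathbf p,\mathbf q)$ is admissible. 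If one insists that $u$ take the singular form \eqref{u-def}, this is a weakly closed constraint (continuity of $K$ in the landmark positions) and is preserved in the limit.

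\noindent
Finally I would establish weak lower semicontinuity of $S_{sm}^\nu$. The kinetic term $\int_0^1\half\vnorm{u_t}^2\,\diff{t}$ is convex and strongly continuous, hence weakly lower semicontinuous. For the weighted momentum term, setting $r^{n,i}_t=\sqrt{\nu(q_t^{n,i})}\,p^{n,i}_t$ and using once more that $\sqrt{\nu(\mathbf q^n)}\to\sqrt{\nu(\mathbf q)}$ uniformly against $\mathbf p^n\rightharpoonup\mathbf p$ yields $r^n\rightharpoonup r$ with $r^i=\sqrt{\nu(q^i)}\,p^i$, so weak lower semicontinuity of the $L^2$ norm gives $\liminf_n\int_0^1\sum_i\nu(q_t^{n,i})|p_t^{n,i}|^2\,\diff{t}\ge\int_0^1\sum_i\nu(q_t^i)|p_t^i|^2\,\diff{t}$. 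Adding the two estimates gives $\liminf_n S_{sm}^\nu(u^n,\mathbf p^n,\mathbf q^n)\ge S_{sm}^\nu(u,\mathbf p,\mathbf q)$; since the left-hand side is the infimum, the admissible limit $(u,\mathbf p,\mathbf q)$ is the sought minimiser. The only structural ingredient beyond the stated two-sided bound on $\nu$ that the argument needs is continuity of $\nu$, without which neither the passage to the limit in the constraint nor the lower semicontinuity of the weighted term can be guaranteed.
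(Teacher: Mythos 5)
Your proposal is correct, and its mathematical core coincides with the paper's proof: both are direct-method arguments hinging on the substitution $\mathbf w=\sqrt{\nu(\mathbf q)}\,\mathbf p$ to restore convexity, and both pass to the limit in the reconstruction constraint via the identical splitting $u^n(\mathbf q^n)-u(\mathbf q)=[u^n(\mathbf q^n)-u^n(\mathbf q)]+[u^n(\mathbf q)-u(\mathbf q)]$, treating the first bracket with the $\textsf{C}_0^1$-embedding and the second with weak--weak continuity of evaluation along the fixed limit path. The differences are organisational, and in places your rendering is the tighter one. The paper substitutes globally, reformulating the whole problem in the variables $(u,\mathbf w,\mathbf q)$ so that the functional becomes a sum of squared norms and lower semicontinuity is immediate from convexity; you keep the primal variables $(u,\mathbf p,\mathbf q)$, derive the a priori bounds explicitly (the lower bound $\nu\ge\nuinf>0$ giving coercive control of $\mathbf p^n$ in $L^2$, the upper bound $\nu\le\sigma^2$ plus the embedding controlling $\dot{\mathbf q}^n$), and deploy the substitution only at the final lower-semicontinuity step. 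More substantively, you make explicit the compactness the paper leaves tacit: the compact embedding $H^1([0,1],\RdM)\hookrightarrow C^0$ yields uniform convergence $\mathbf q^n\to\mathbf q$, which is precisely what justifies the strong--weak product limits $\nu(\mathbf q^n)\mathbf p^n\rightharpoonup\nu(\mathbf q)\mathbf p$ and $\sqrt{\nu(\mathbf q^n)}\mathbf p^n\rightharpoonup\sqrt{\nu(\mathbf q)}\mathbf p$. The paper's corresponding display, $\langle\sqrt{\nu(\mathbf q_t)}\mathbf w_t-\sqrt{\nu(\mathbf q_t^n)}\mathbf w_t^n,\phi\rangle\lesssim\nuinf\langle\mathbf w_t-\mathbf w_t^n,\phi\rangle$, silently discards the cross term $\bigl(\sqrt{\nu(\mathbf q_t)}-\sqrt{\nu(\mathbf q_t^n)}\bigr)\mathbf w_t^n$, and its Lipschitz step needs a uniform-in-$n$ bound; your uniform-convergence argument is what actually closes both. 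Your closing observation that continuity of $\nu$ is required beyond the stated two-sided bounds is also fair: the paper uses it implicitly (its integrability theorem even assumes $\nu\in W^{2,\infty}(\Rd)$), and for the Gaussian parameterisation \eqref{nu_h} it holds automatically.
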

\begin{proof}
The functional in \eqref{E_sm-def_p} is not convex so we work with a
reformulation to ensure the required lower semi-continuity. Define a
variable $w^i_t = \sqrt{\nu(q_t^i)} p^i_t$ in the problem:

\begin{align*}
\inf_{\substack{u \in L^2([0,1],\,V)\\ \mathbf q\in H^1([0,1],\,\RdM)\\\mathbf w\in L^2([0,1],\,\RdM)}}
    & \int_0^1\half\left (\|u_t\|^2_V + \sum_{i=1}^M |w_t^i|^2\right )\diff{t}\\
    & \dot{\mathbf q_t^i} = u_t(\mathbf q_t) + \sqrt{\nu(\mathbf q_t)} \mathbf w_t\\
    & \mathbf q_0,\,\mathbf q_1\text{ fixed}
  \label{pbl:reformulation}
\end{align*}
First, note that owing to
the constraint effectively being a boundary value problem, we cannot always find
a $\mathbf q$ for arbitrary pairs of $(u,\,\mathbf w)$. We define a bounded operator
$(\mathbf q,\, u_t)\mapsto \frac{\dot{\mathbf q_t} - u_t(\mathbf
q_t)}{\sqrt{\nu(\mathbf q_t)}} \triangleq \mathbf w$:
\begin{align*}
\Big(\sum_{i=1}^M |w_t^i|^2\Big)^{\frac 12} & = \ltwonorm{\mathbf w} =
\ltwonorm{\frac{\dot{\mathbf q}_t - u_t(\mathbf q_t)}{\sqrt{\nu(\mathbf q_t)}}}
\lesssim \nuinf^{-1}\Big(\ltwonorm{\dot{\mathbf q}_t} + \vnorm{u_t(\mathbf
q_t)}\Big)\,.
\end{align*}
From this we generate a minimising sequence $(\mathbf q^n, u^n, \mathbf
w^n)_{n\geq 0}$ admissible to \eqref{pbl:reformulation}. The rest of the proof
is standard, see e.g. \cite{younes2010shapes}. We show the constraint equation
is continuous with respect to the weak topology on $X\triangleq
H^1([0,1],\,\RdM)\times L^2([0,1],\,V)\times L^2([0,1],\,\RdM)$ i.e.  $e(\mathbf
q_t^n,\,\mathbf w_t^n,\, u_t^n)\rightharpoonup e(\mathbf q_t,\,\mathbf
w_t,\, u_t)$ where $e(q,\,w,\,u) \triangleq \dot{q} - u(q) -
\sqrt{\nu(q)}w$. Then,
\begin{align*}
\langle\sqrt{\nu(\mathbf q_t)}\mathbf w_t - \sqrt{\nu(\mathbf q_t^n)}\mathbf w_t^n,\,\phi\rangle \lesssim \nuinf\langle
\mathbf w_t - \mathbf w_t^n,\,\phi\rangle \rightarrow 0\,,\quad \forall \phi \in
L^2([0,1],\,\RdM)\,.
\end{align*}
Further, for $\phi\in L^2([0,1],\,V)$,
\begin{align*}
\langle u_t(\mathbf q_t)-u_t^n(\mathbf q_t^n),\,\phi\rangle & = \langle
u_t(\mathbf q_t)- u_t^n(\mathbf q_t),\,\phi\rangle + \langle u_t^n(\mathbf
q_t)-u^n(\mathbf q_t^n),\,\phi\rangle\,.
\end{align*}
The first term vanishes trivially, while for the second we see
\begin{align*}
\langle u_t^n(\mathbf q_t)-u_t^n(\mathbf q_t^n),\,\phi\rangle \leq \text{Lip}(u_t^n)\langle \mathbf q_t-\mathbf q_t^n,\,\phi\rangle \rightarrow 0
\end{align*}
Since linear operators are naturally compatible with the weak topology the
required continuity follows.  Passing to subsequences where necessary we can by
classic results extract bounded subsequences converging to weak limits where
necessary to obtain a minimiser. Convexity of $S$ implies weak lower
semi-continuity concluding the proof.
{\hfill $\square$}
\end{proof}
\begin{theorem}
Assume $\nu \in W^{2, \infty}(\Rd)$ and $V$ is embedded in
$\textsf{C}_0^k(\Rd)$, $k\geq 1$ (continuous functions with continuous
derivatives to order $k$ vanishing at infinity). Then, given $\mathbf
p_0,\,\mathbf q_0, \in \RdM$, \eqref{eq-m-selective} with \eqref{u-def} are
integrable for all time.
\end{theorem}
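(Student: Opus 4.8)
The plan is to combine local existence and uniqueness from the Cauchy--Lipschitz theorem with an \emph{a priori} estimate ruling out finite-time blow-up, the estimate being driven by the conserved Hamiltonian \eqref{hamiltonian}. First I would verify that the right-hand side of \eqref{eq-m-selective}, with $u_t$ given by \eqref{u-def}, is locally Lipschitz in $(\mathbf q,\mathbf p)$. The terms $u_t(\mathbf q_t)$ and $\nabla u_t(\mathbf q_t)$ are polynomial in $\mathbf p$ and involve $K$ and $\nabla K$ at landmark differences; the embedding $V\hookrightarrow\textsf{C}_0^k(\Rd)$ with $k\geq 1$ gives that the (translation-invariant) kernel has locally Lipschitz first derivatives, while $\nu\in W^{2,\infty}(\Rd)$ makes $\nu$ and $\nabla\nu$ Lipschitz. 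Hence the field is locally Lipschitz and we obtain a unique maximal solution on some $[0,T^\ast)$; it remains to show $T^\ast=\infty$, i.e.\ that $|\mathbf p_t|$ and $|\mathbf q_t|$ cannot escape in finite time.

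Since \eqref{eq-m-selective} are Hamilton's equations for \eqref{hamiltonian}, the value $h(\mathbf q_t,\mathbf p_t)\equiv h(\mathbf q_0,\mathbf p_0)=:h_0$ is constant. The kernel is positive definite, so $h_l\geq 0$, and for $\nu\geq 0$ (which holds by construction, $\nu$ playing the role of a metamorphosis strength) both $h_l$ and the metamorphosis term are nonnegative. This yields the two bounds $\vnorm{u_t}^2=2h_l\leq 2h_0$ and $\tfrac12\sum_i\nu(q_t^i)|p_t^i|^2\leq h_0$. Combining the first with the embedding gives $\|\nabla u_t\|_\infty\lesssim\vnorm{u_t}\leq\sqrt{2h_0}$, so the transport term in $\dot{\mathbf p}_t$ contributes at most $\lesssim\sqrt{h_0}\,|\mathbf p_t|^2$ to $\tfrac{\diff{}}{\diff{t}}\tfrac12|\mathbf p_t|^2$.

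The main obstacle is the remaining term $-\tfrac12\nabla\nu(\mathbf q_t)|\mathbf p_t|^2$, which is \emph{cubic} in $\mathbf p$ and would by itself allow finite-time blow-up, so a naive Grönwall argument on $|\mathbf p_t|^2$ does not close. This is exactly where the $W^{2,\infty}$ hypothesis enters: for a nonnegative function with bounded Hessian one has the pointwise inequality
\[ |\nabla\nu(x)|^2\leq 2\|\nabla^2\nu\|_\infty\,\nu(x), \]
obtained by requiring the quadratic lower bound from Taylor's theorem to remain nonnegative. Hence $|\nabla\nu(q_t^i)|\,|p_t^i|^3\leq\sqrt{2\|\nabla^2\nu\|_\infty}\,\big(\sqrt{\nu(q_t^i)}\,|p_t^i|\big)\,|p_t^i|^2$, and the conservation bound $\nu(q_t^i)|p_t^i|^2\leq 2h_0$ turns the parenthesised factor into $\sqrt{2h_0}$, so the cubic term is dominated by $\lesssim\sqrt{h_0\,\|\nabla^2\nu\|_\infty}\,|\mathbf p_t|^2$. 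Thus $\tfrac{\diff{}}{\diff{t}}|\mathbf p_t|^2\leq C(h_0,\nu)\,|\mathbf p_t|^2$, and Grönwall's inequality bounds $|\mathbf p_t|$ on every finite interval. Since then $|\dot{\mathbf q}_t|\leq\|u_t\|_\infty+\|\nu\|_\infty|\mathbf p_t|\lesssim\sqrt{h_0}+\|\nu\|_\infty|\mathbf p_t|$ is locally integrable, $|\mathbf q_t|$ also stays finite. Therefore no blow-up occurs, $T^\ast=\infty$, and the system is integrable for all time. The delicate point, as indicated, is precisely the interplay between the Hessian-control inequality (which is why two derivatives, i.e.\ $W^{2,\infty}$, are assumed) and the Hamiltonian bound on $\nu(q^i)|p^i|^2$, which together downgrade the problematic cubic term to a quadratic one.
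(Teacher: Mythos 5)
Your proposal is correct and follows the same skeleton as the paper's proof: local well-posedness via Cauchy--Lipschitz (checking that the kernel terms and the maps $(q,p)\mapsto\nu(q)p$, $(q,p)\mapsto\nabla\nu(q)|p|^2$ are locally Lipschitz), followed by an appeal to conservation of the Hamiltonian to extend solutions globally. The difference lies in what happens at that second step. The paper stops at the one-line assertion that ``by the conservation of the Hamiltonian we can extend the existence of solutions'' to arbitrary times, and this assertion is not self-evident: since $\nu$ is allowed to vanish on parts of the domain (the whole point of selective metamorphosis), the conserved quantity $h_l+\tfrac12\sum_i\nu(q_i)|p_i|^2$ does not directly bound $|\mathbf p_t|$, and the term $-\tfrac12\nabla\nu(\mathbf q_t)|\mathbf p_t|^2$ is cubic in momentum, so finite-time blow-up cannot be excluded by inspection. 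Your argument supplies precisely the missing lemma: the Glaeser-type inequality $|\nabla\nu|^2\leq 2\|\nabla^2\nu\|_\infty\,\nu$ for nonnegative $W^{2,\infty}$ functions, which converts the Hamiltonian bound $\nu(q_t^i)|p_t^i|^2\leq 2h_0$ into an estimate of the cubic term that is only quadratic in $|\mathbf p_t|$, after which Gr\"onwall closes the argument. This also explains \emph{why} the $W^{2,\infty}$ hypothesis is the natural one, whereas in the paper it is invoked only to obtain local Lipschitz continuity of the maps \eqref{nu_maps}. Two small caveats: you need $\nu\geq 0$, which is not stated in the theorem but is implicit in the model (and you correctly flag this); and local Lipschitz continuity of $q\mapsto\nabla u(q)$ really requires the kernel to be locally $C^{1,1}$, i.e.\ slightly more than the stated $k\geq 1$ --- but the paper's own displayed inequality for $\nabla v(x)-\nabla w(y)$ tacitly assumes the same, so this is a defect of the theorem statement rather than of your argument.
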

\begin{proof}
Establishing appropriate Lipschitz conditions implies integrability of the
system akin to \cite[Theorem 5]{cotter2013bayesian}. We note that the kernel in
\eqref{u-def} is Lipschitz in $(p_t, q_t)$ by assumption, so the composition $(p,\,q) \mapsto
u\circ q$ is also Lipschitz. $u(q)\mapsto\nabla u(q)^T$ consider $v,\,w \in V$ and $x,\,y\in\Rd$:
\begin{align}
\ltwonorm{\nabla v(x) - \nabla w(y)} \lesssim \vnorm{v} \ltwonorm{x-y} + \vnorm{v-w}\ltwonorm{y}
\end{align}
so the mapping is Lipschitz in both the position and velocity. Given the
conditions on $\nu$ the mappings
\begin{align}
  \begin{split}
    & (q,\,p)\mapsto \nu(q)p\\
    &(q,\,p)\mapsto \nabla \nu(q)|p|^2
  \end{split}\label{nu_maps}
\end{align}
are locally Lipschitz. Consequently we verify that for any $(\mathbf
p_0,\,\mathbf q_0)\in B(0,r)\subset \RdM\times\RdM$, the system
\eqref{eq-m-selective} is locally Lipschitz with constant $L_{r,t_0}$ for some
$t_0>0$. By the conservation of the Hamiltonian we can extend the existence of
solutions to arbitary $t>t_0$.
{\hfill $\square$}
\end{proof}

\section{Bayesian Framework}\label{sec:bayesian}

We now place a stochastic model on $\nu$ inspired by the approach taken in
\cite{cotter2013bayesian}, see also
\cite{schiratti2017bayesian,allassonniere2007towards,allassonniere2008map} for
similar Bayesian approaches in computational anatomy. The goal is to develop an
algorithm to infer $\nu$, passing via the deterministic problem seen above.
First, we present some preliminaries on the Bayesian approach to inverse
problems in section \ref{subs:gf}, essentially quoting concepts and results from
\cite{dashti2017bayesian}, or \cite{cotter2013mcmc} for an exposition of
algorithmic aspects of function space MCMC. Section \ref{subs:finite-dim-param}
then formally describes how we apply this stochastic approach to inverse
problems to $\nu$ by a finite-dimensional family of parameterisations.

\subsection{General Framework}\label{subs:gf}

The general framework is based on the idea that we can cast optimisation
problems in a probabilistic framework where minimisers, roughly speaking,
correspond to modes of a certain distribution over function space.  In the
context of optimisation we define a \emph{likelihood} $\Phi : X\rightarrow Y$,
mapping some control variable in $X$ to an observable in $Y$.  A \emph{maximum a
posteriori} (MAP) estimator $f^*$ satisfies $f^* = \arg\max_{f\in X} \Phi(f)
p(f)$ where $p$ is a density over $X$. Equipped with a norm $\|\cdot\|_X$ we can
then define the Gaussian density by $p(\cdot) \propto e^{-\|\cdot\|_X^2}$.
Supposing further that the likelihood is on the form $\log \Phi(f) =
-\|f-\lambda\|_Y^2$, for some desiderata $\lambda\in Y$ then, at least formally,
the MAP estimator minimises $J=\|f^*\|_X^2 + \|f^*-\lambda\|_Y^2$.\\

For general inverse problems on function space, several key properties
documented in \cite{dashti2017bayesian} must be verified before the
inverse problem is well-posed. Beyond showing the existence of the MAP estimator
minimising $J$ above, the infinite-dimensional version of Bayes' rule must also
be checked i.e. the Radon-Nikodym derivative of the prior with respect to the
posterior must exist and be absolutely continuous. Finally, we request
continuity of the posterior distribution w.r.t the initial data corresponding in
a sense to Hadamard well-posedness in a probabilistic framework. Rigorously
treating this Bayesian inverse problem is subject to further study in 
forthcoming works.

\subsection{Finite-dimensional Parameterisation}\label{subs:finite-dim-param}

We now introduce the main problem of this paper in the setting above. We
consider $\nu$ as a random variable. If the growth location is not known \emph{a
priori}, then this is an appropriate framework as it allows for a qualitative
evaluation of selective metamorphosis. Moreover, it can account for some
observation error by its probabilistic nature. As a simple example, we consider
the case where $\nu$ is given by a sum of Gaussians 
\begin{equation}\label{nu_h}
    \nu_h (x) = \sum_{k=1}^K e^{ -\sigma_\nu^{-2}\|h_k - x\|_{\Rd}^2}\, . 
\end{equation}
Here the finite family $h_k$ of centroids in $\Rd$ together with the uniform
length-scale $\sigma_\nu$ fully determine $\nu_h$, thus greatly reducing the
complexity of sampling. We defer sampling from function space to future work.
Defining a density $p_{sm} \propto e^{- S_{sm}^\nu}$ over the space of triples
$(\nu,\,\mathbf q_\nu,\,\mathbf p_\nu)$ leads to the preconditioned
Crank-Nicholson algorithm \ref{algo:mcmc}, see e.g. \cite{hairer2014spectral}.
\begin{algorithm}[h!]
\begin{algorithmic}
\caption{MCMC for selective metamorphosis}\label{algo:mcmc}
\Procedure{mcmcSM}{$N$, $K$, $\mathbf q_0$, $\mathbf q_1$, $\beta\in (0,1]$}
\State $j \gets 1$
\State $\nu^j \gets \text{initial guess in } \RdK$
\State Solve \eqref{eq-m-selective} with $\nu^j$ and $\mathbf q_0,\,\mathbf q_1$
to obtain $\omega^j = (\mathbf q^j,\,\mathbf p^j,\, u^j)$
\While{$j < N$}
\State Sample a random point $\xi \in \mathcal N(0, \text{Id}_{\mathbb R^d})^K$
\State $\nu \gets \beta \xi + \sqrt{1-\beta^2} \nu^j$
\State Solve \eqref{eq-m-selective} with $\nu$ and $\mathbf q_0,\,\mathbf q_1$
to obtain $\omega = (\mathbf q,\, \mathbf p,\, u)$
\If {\textsc{randomUnit()}$\,< \min(1,\, e^{- S_{sm}^{\nu^j}(\omega^j) + S_{sm}^{\nu}(\omega)})$}
    \State $\nu^{j+1} \gets \nu$
    \State $\omega^{j+1} \gets \omega$
\Else
    \State $\nu^{j+1} \gets \nu^j$
    \State $\omega^{j+1} \gets \omega^j$
\EndIf
\State $j\gets j+1$
\EndWhile
\Return $\{\nu^j,\, \omega^j\}_{j=1}^N$
\EndProcedure
\end{algorithmic}
\end{algorithm}

Here, $N$ denotes the desired number
of samples and $K$ the number of terms in \eqref{nu_h}. \textsc{randomUnit()}
denotes a randomly generated number in $[0,\,1]$. The coefficient $\beta$ scales
between the previous sample and the new step $\xi$. This needs to be calibrated
as a too low value may increase the acceptance rate by taking shorter steps at
the cost of slow exploration of the state space. Conversely, a too high value of
$\beta$ results in lower acceptance and thus convergence. The next section shows
the algorithm above in practice.

\section{Numerical examples}\label{sec:numerical}

This section displays some numerical results for our method. We apply algorithm
\ref{algo:mcmc} to infer a distribution for the growth location using the
landmark boundary conditions seen in figure \ref{fig:mm_lddmm}.  The parameters
and results for the first configuration is shown in figure
\ref{fig:selective:crisscross}. These preliminary results demonstrate that even
for a small number of samples the density of accepted samples corresponds at
least heuristically to the analytical density histogram obtained by computing
the value of the metamorphosis functional in \eqref{E_sm-def}.

\begin{figure}[h!]
  \centering
  \subfigure{\includegraphics[scale=.35]{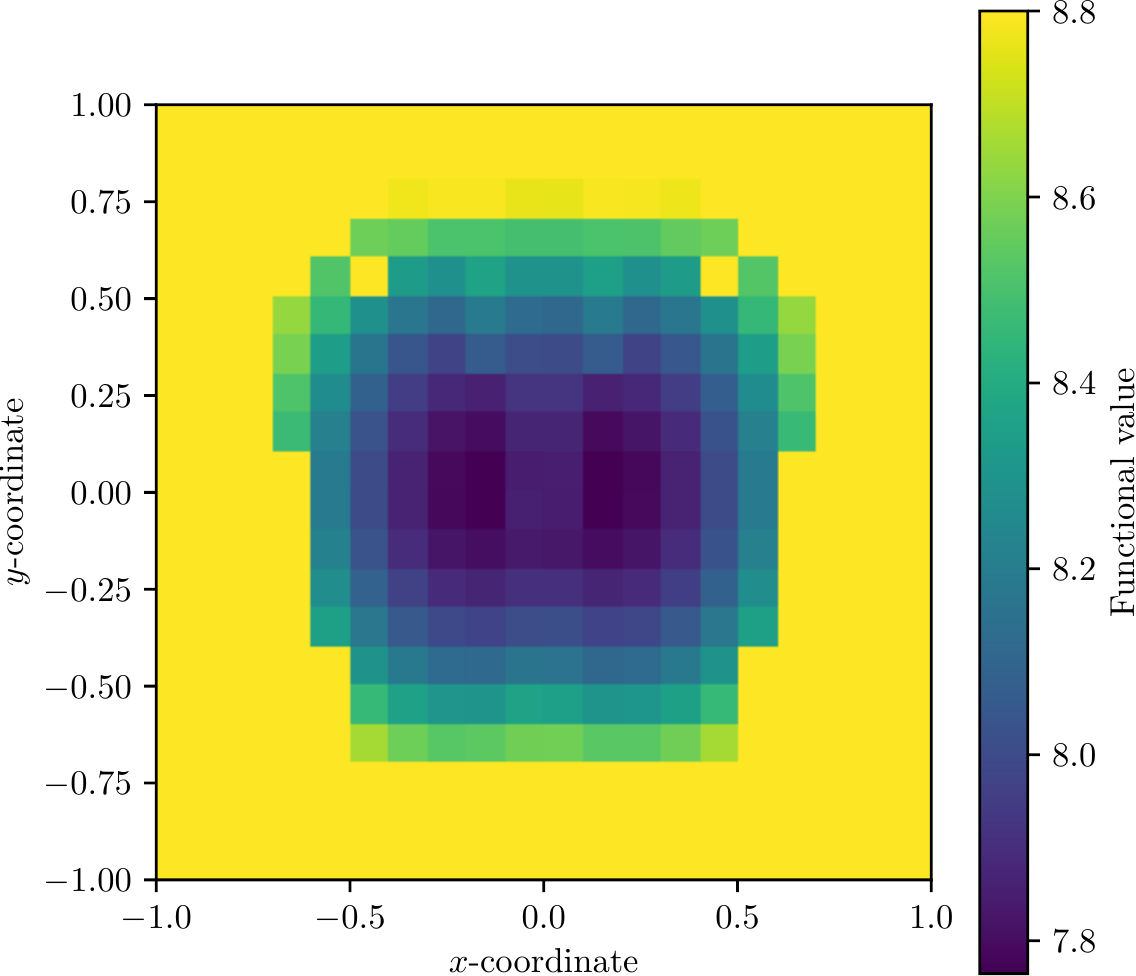}}
  \subfigure{\includegraphics[scale=.35]{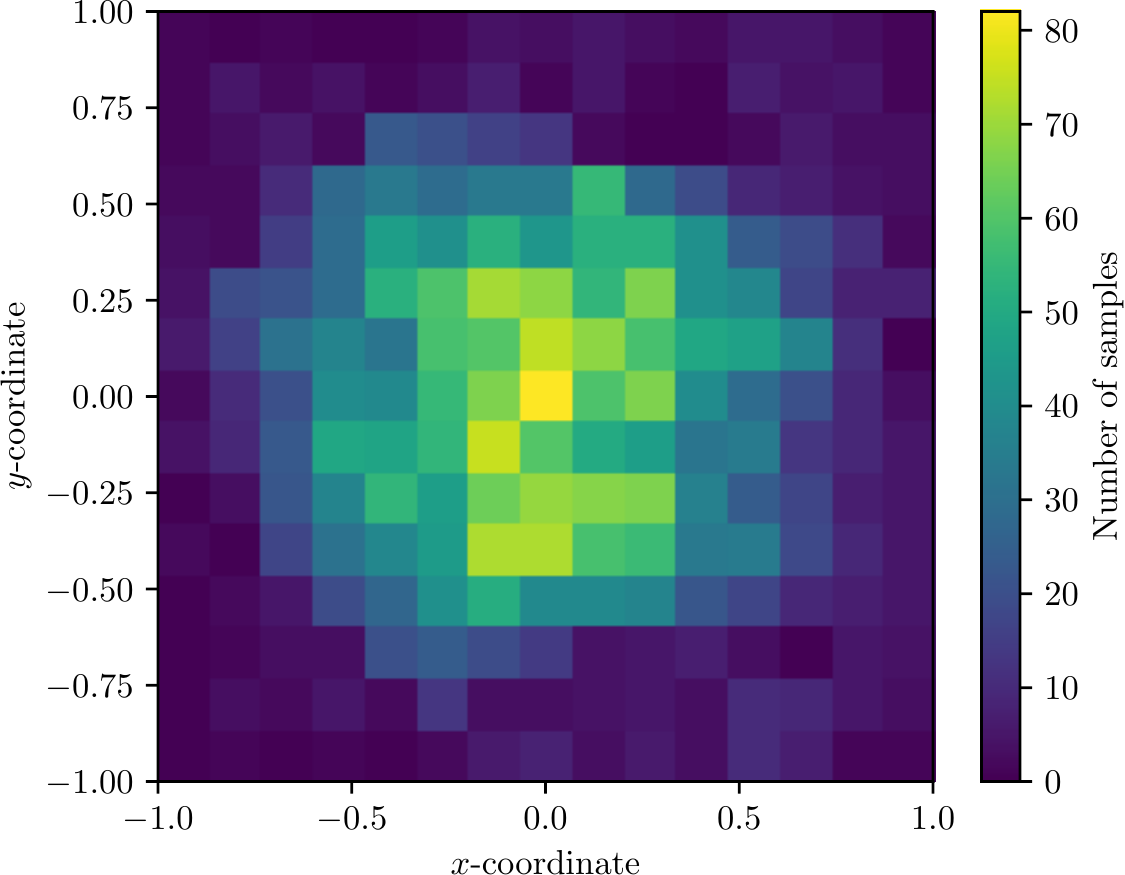}}
  \subfigure{\includegraphics[scale=.35]{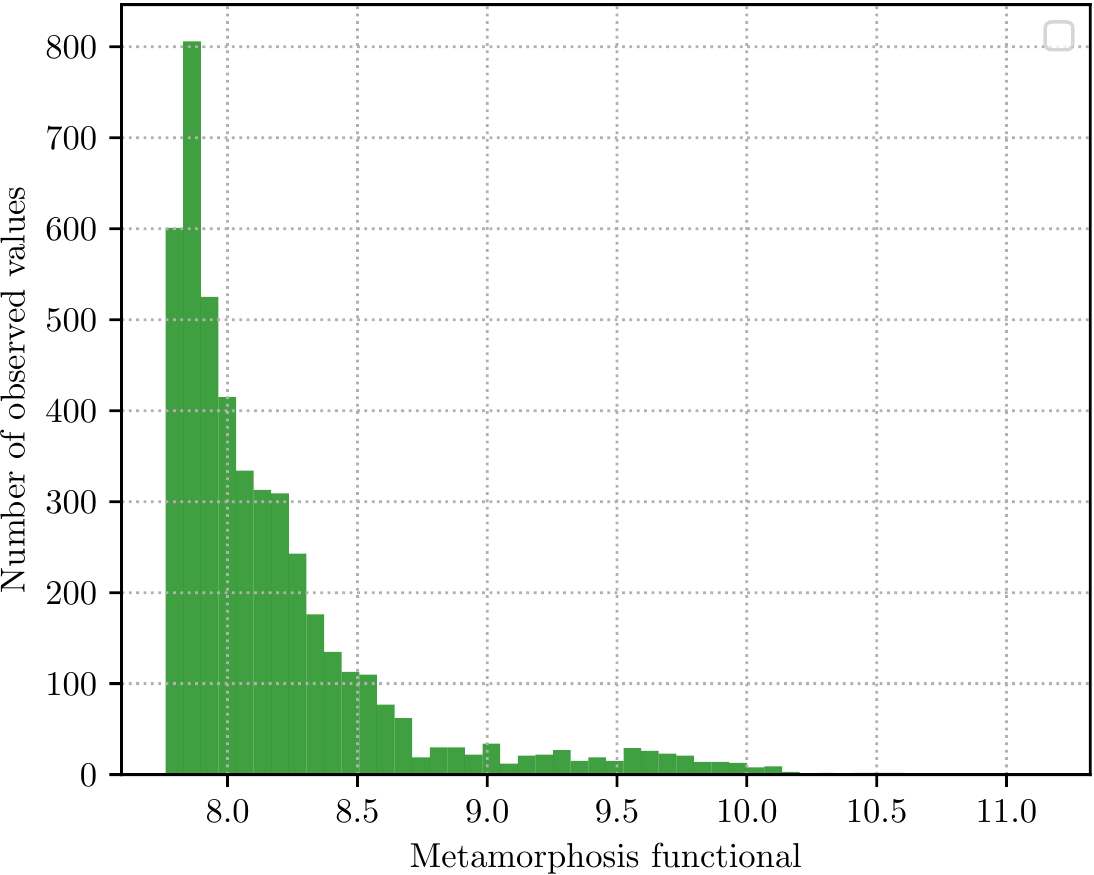}}
  \subfigure{\includegraphics[scale=.39]{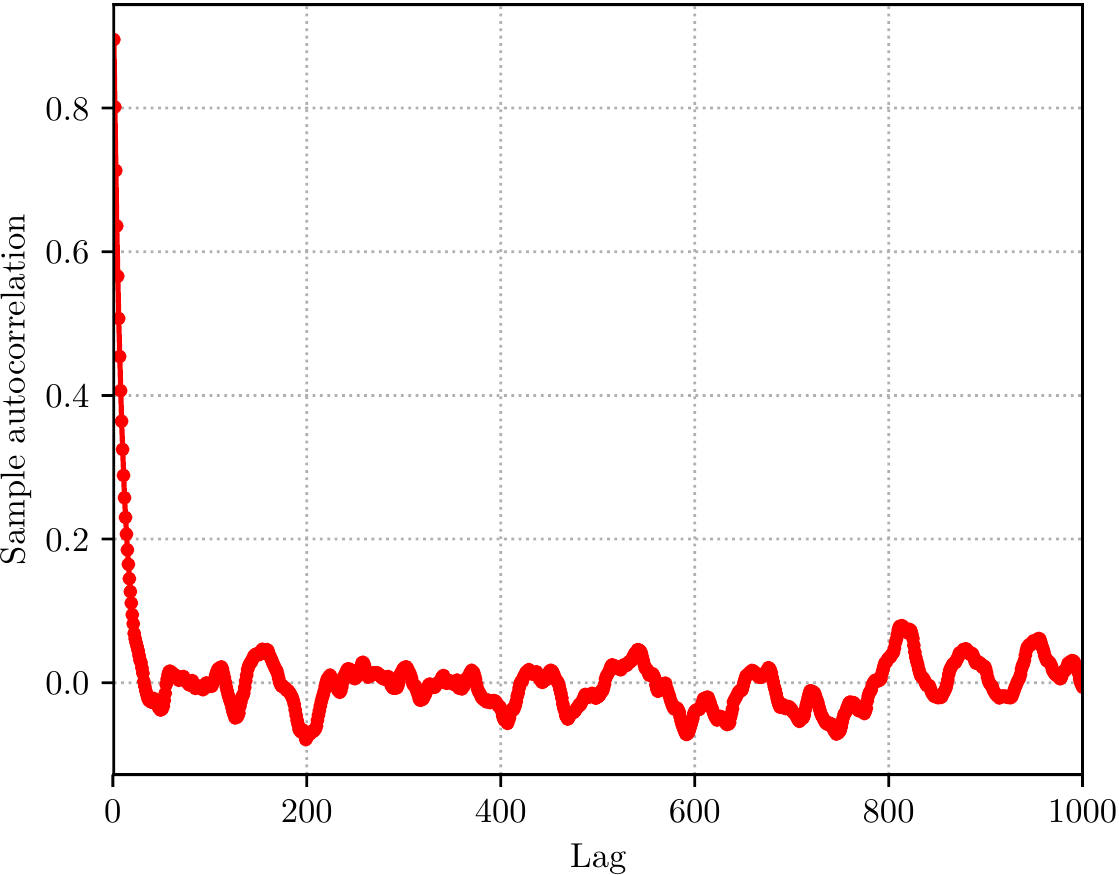}}
  \subfigure{\includegraphics[scale=.34]{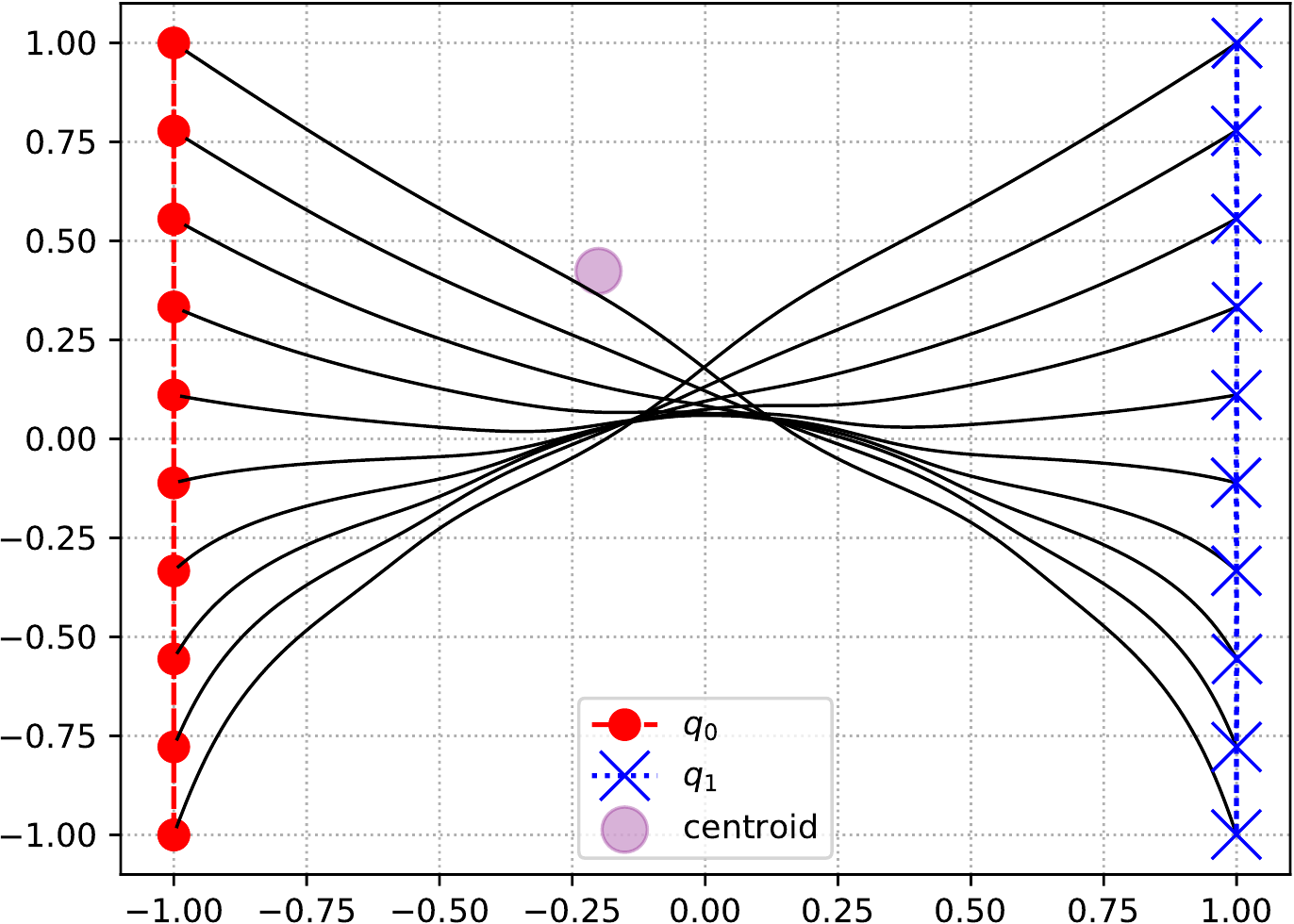}}
    \caption{We display the result of the MCMC algorithm \ref{algo:mcmc} applied
    to the inverted landmarks example of figure \ref{fig:mm_lddmm}.  The top left
    panel shows the analytical values for the functional \eqref{E_sm-def}
    obtained for various positions of a single Gaussian $\nu$. We observe a
    bimodal minimum near $(0,0)$, which depends on the choice of the model
    parameters, and in particular on the landmark interaction length
    corresponding to the Gaussian kernel $K$ and $\sigma_\nu$.  The top middle
    panel displays a heat map for the sampled positions of the centroid from the
    MCMC method, where the bimodality is not clearly visible.  The top right
    panel is a histogram of the sampled
    values of the functional which rapidly decays, indicating a good sampling of
    the minimum value of the functional.  The bottom left panel shows the
    autocorrelation function of the Markov chain, which decays rapidly to reach
    an un-correlated state after $50$ iterations.  The bottom right panel is one
    of the MAP estimators where the centroid is near on the edge of one of the
    wells of the top left panel.  The simulations parameters are set to
    $\sigma_\nu = 0.2$, and $0.7$ for the velocity kernel, $K=1$ and $\beta=0.2$
    across $5000$ samples.}
    \label{fig:selective:crisscross}
\end{figure}

We arrive at the same conclusion for the second example, for which the results
are shown in figure \eqref{fig:selective:pinch}. Moreover, we note that the
geodesic equations for $p$ and $q$ are time-reversible meaning that the
configuration in figure \ref{fig:selective:pinch} corresponds to both particle
collapse as well as hole creation. 

\begin{figure}[h!]
\centering
  \subfigure{\includegraphics[scale=.34]{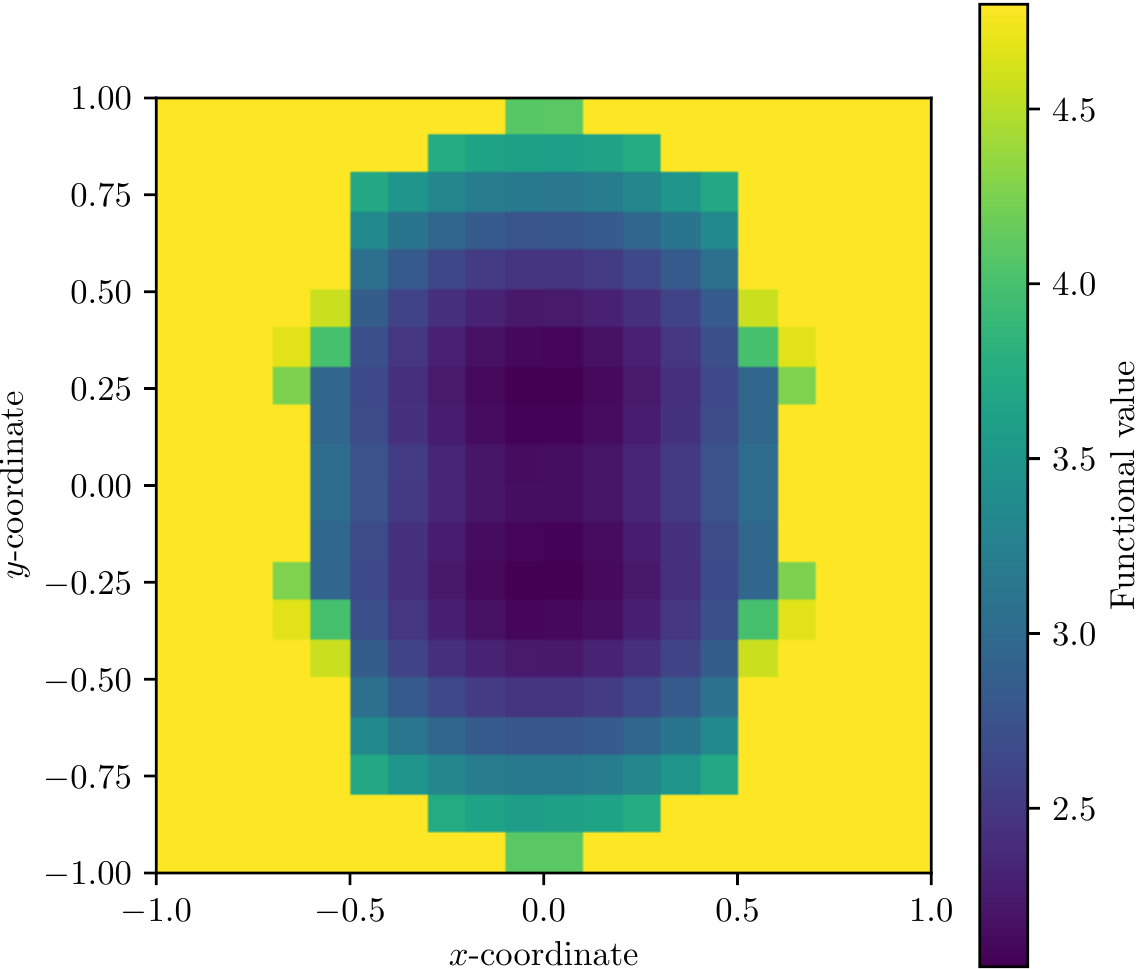}}
  \subfigure{\includegraphics[scale=.34]{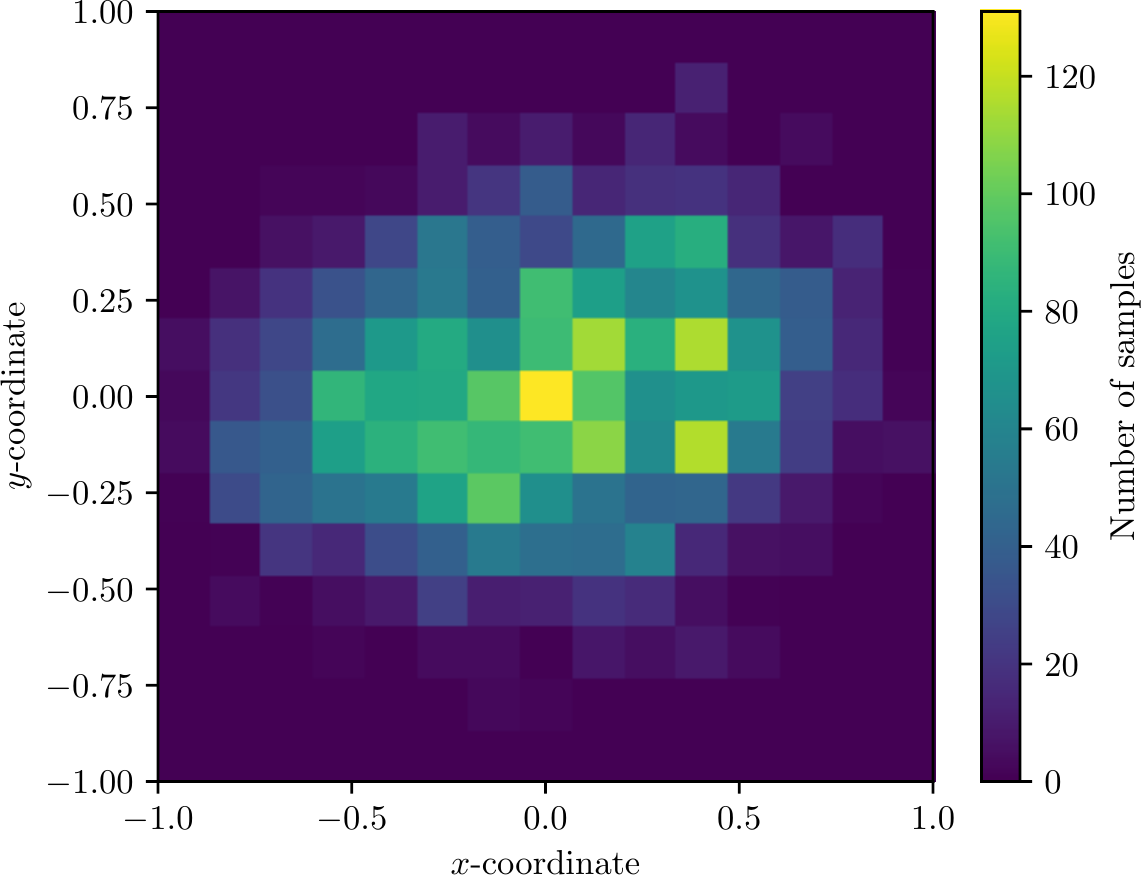}}
  \subfigure{\includegraphics[scale=.34]{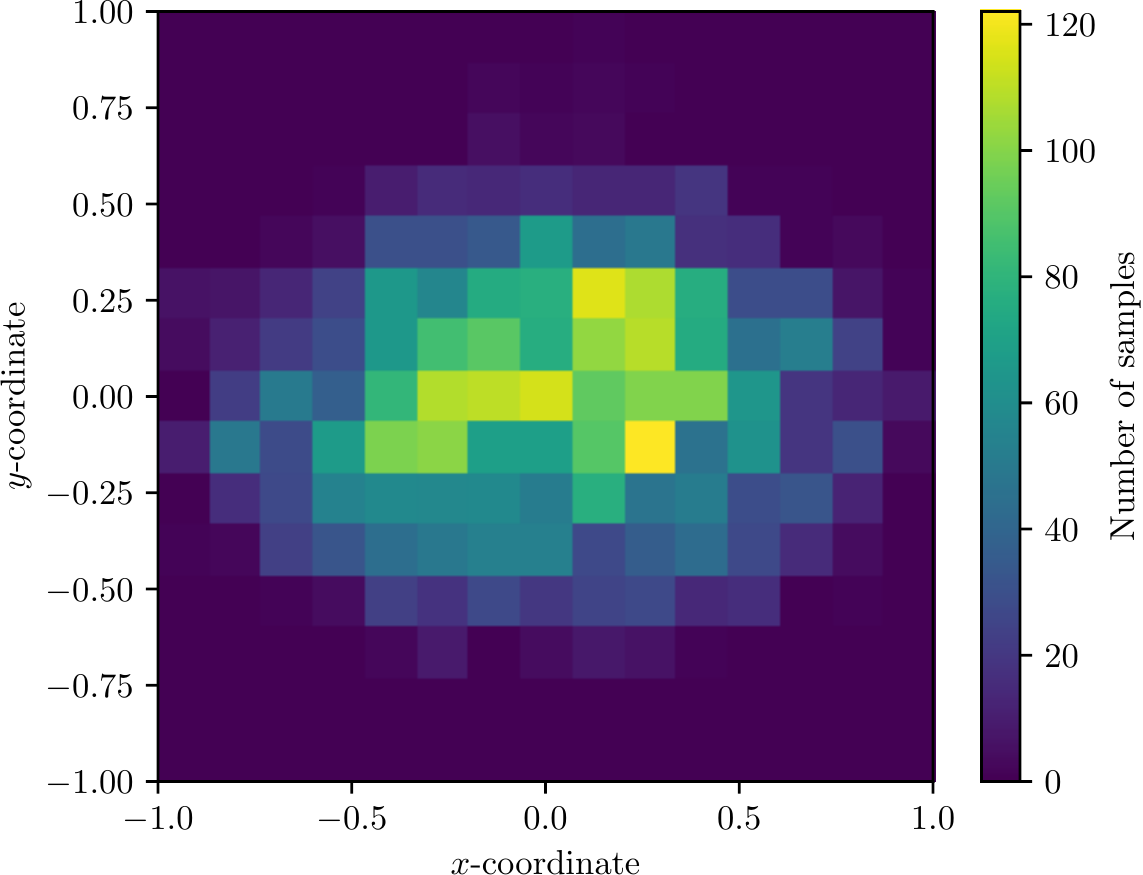}}
  \subfigure{\includegraphics[scale=.33]{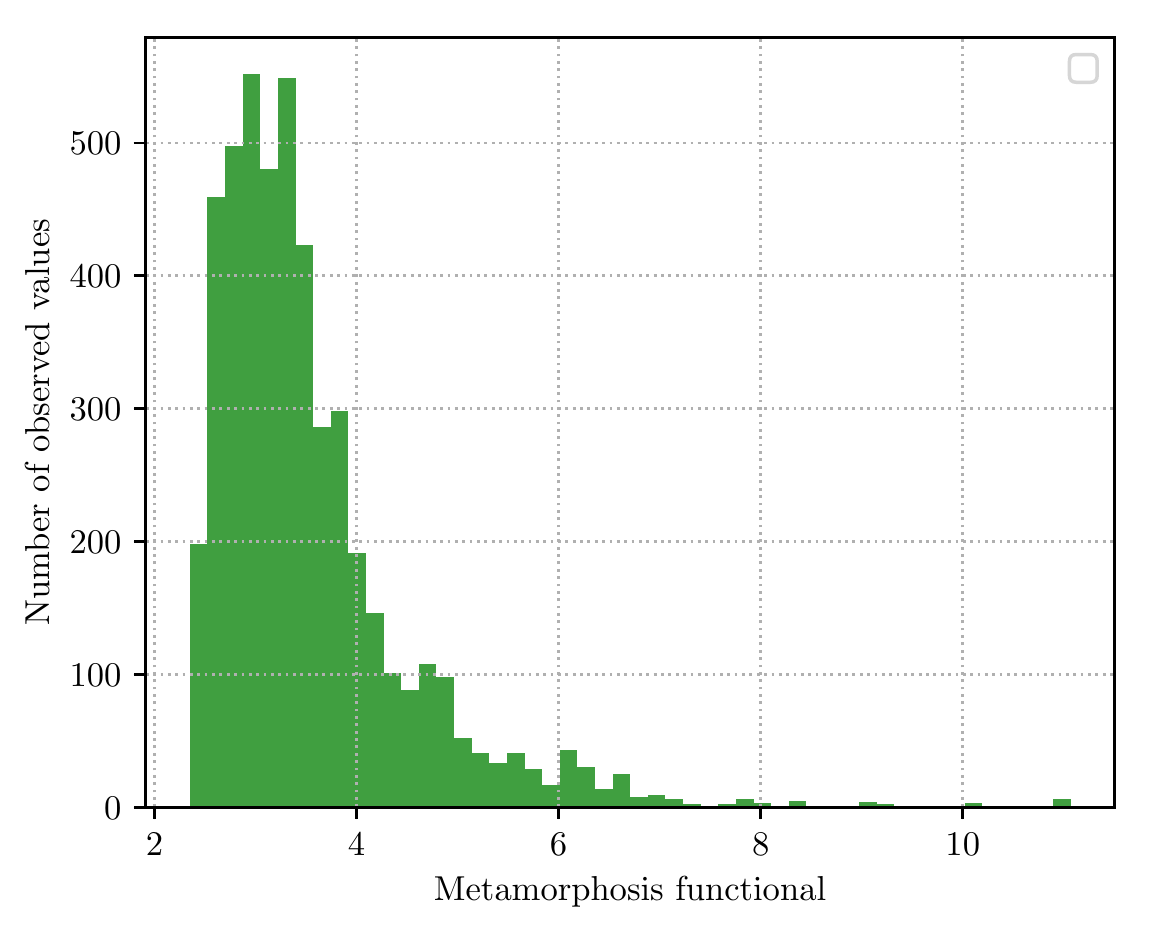}}
  \subfigure{\includegraphics[scale=.34]{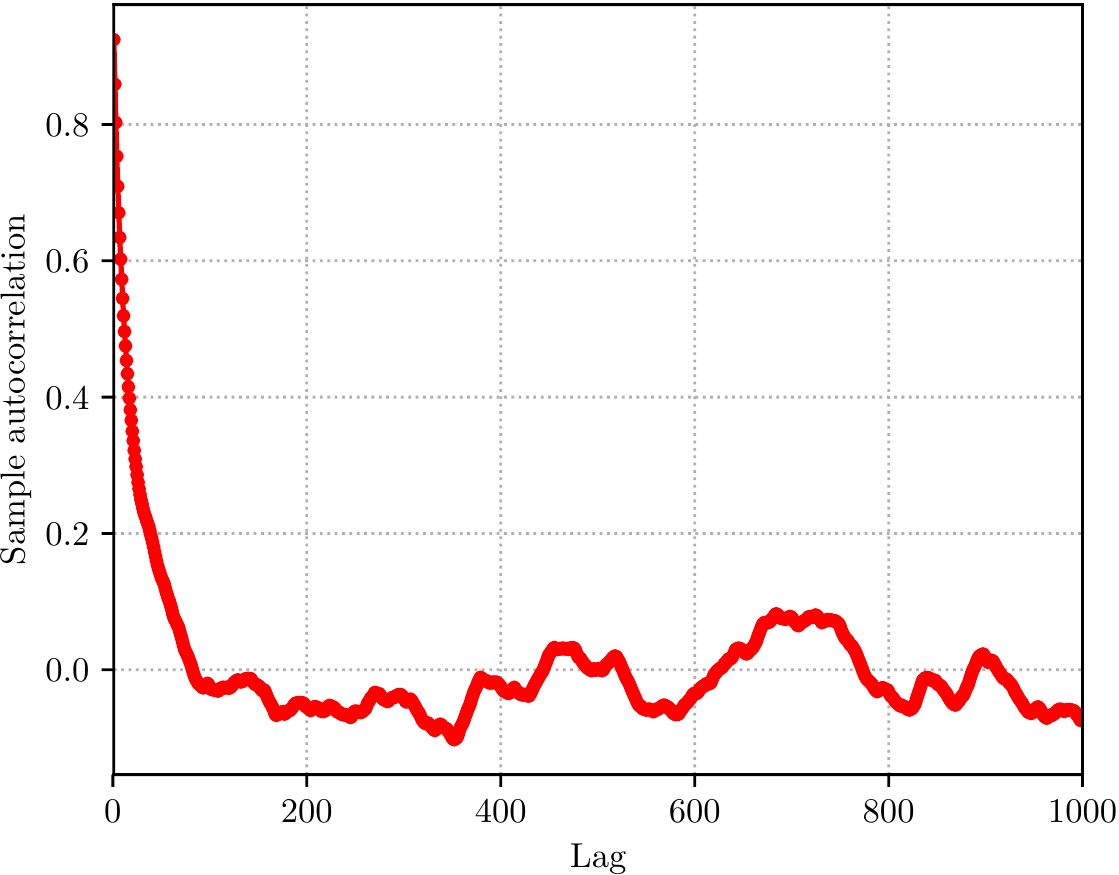}}
  \subfigure{\includegraphics[scale=.30]{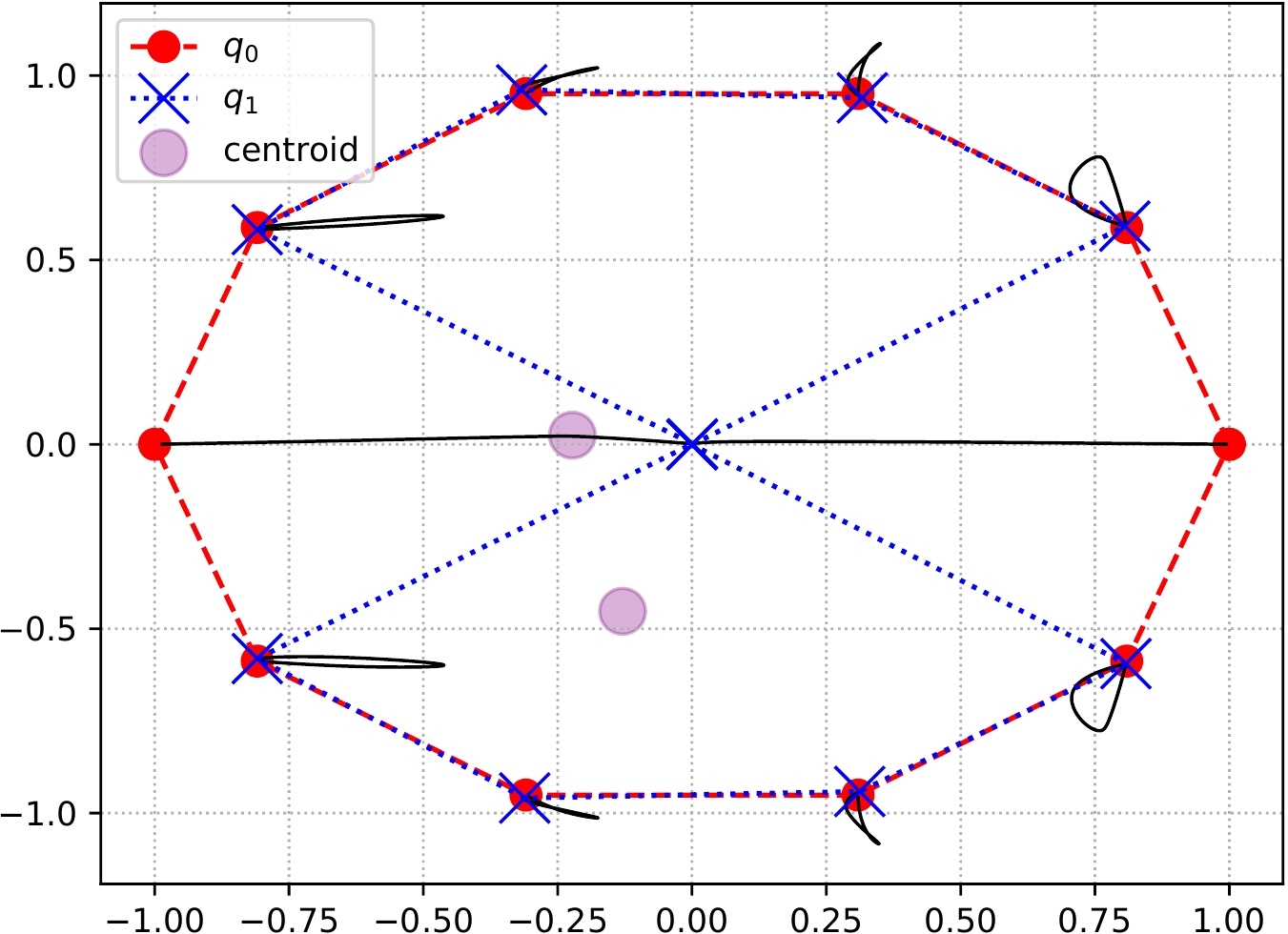}}
   
    \caption{ Here we display the results for the second example (landmark
    collapse) of figure \ref{fig:mm_lddmm}.  Again, the top left panel shows the
    analytical values for a single $\nu$ field \eqref{E_sm-def}, which has also
    a bimodal structure, but in the other direction.  For the MCMC we choose
    $K=2$ Gaussian $\nu$ fields, and the top middle and right displays two heat
    maps for the sampled positions of these centroids.  The bottom left panel is
    a histogram of the sampled values of the functional, which has a peak at
    slightly higher values, possibly due to the redundant choice of two $\nu$
    functions.  The bottom middle panel is the autocorrelation function of the
    Markov chain which shows decorrelation after $100$ steps.  The bottom right
    panel shows the geodesics yielding one of the lowest functional values,
    where the two $\nu$  fields are close to each other, demonstrating the fact
    that only $1$ would have been enough for this landmark configuration.  The
    simulation parameters are the same as in figure
    \ref{fig:selective:crisscross} with the exception of $K=2$.
}
    \label{fig:selective:pinch}
\end{figure}

It is numerically relatively simple to control the behaviour of $\nu$ by simple
scaling or by adding regularisation terms to \eqref{E_sm-def} to e.g. penalise
having $\nu$'s far away from the support of the images. 
Such cost can easily be added to the MCMC algorithm, depending on the prior
information one can have on the shape matching problem.

\section{Conclusion}\label{sec:outlook}

We have presented a preliminary approach for selectively allowing photometric
variation in a diffeomorphic image matching. We analysed the selective
metamorphosis problem, the associated geodesic equations and demonstrated a
proof of concept MCMC algorithm inferring a simple parameterisation of $\nu$.
This generalises LDDMM and metamorphosis and could provide a first-order
exploratory tool for physicians to see if the development of a biological
feature stems from a few violations of diffeomorphic evolution. This paper paves
the way towards surgically investigating growth phenomena between topologically
different images.

Future work is manifold. Firstly, we aim to extend the equations of section
\ref{sec:select_mm} to images e.g. using the kernel framework in
\cite{richardson2016metamorphosis} or developing a space-time method. We also
aim to find an explicit solution to the geodesic equations for $\mathbf p$ and
$\mathbf q$ and with the additional terms involving $\nu$ à la
\cite{trouve2005local} to eliminate the need $\nu$ to be bounded from below.
Further, as outlined in \ref{sec:bayesian} there
are many aspects of the probabilistic framework that need rigorous treatment.
Beyond the references therein, see also the work in \cite{dashti2013map}.
Natural extensions of our probabilistic approach also include fully treating
$\nu$ as a function and interpreting the resulting inverse problem through the
appropriate measure-theoretical lens. Adding a time-dependency to $\nu$ can also
be explored. Determining a truncated Fourier series of $\nu$ could lead to
efficient numerical methods. More generally, we hope to reconcile our attempts
to model growth here with the mathematically elegant approach described in
\cite{kaltenmark2016geometrical} and with the more general mathematics of growth
\cite{goriely2017mathematics}.

Finally, it is our hope that we can extend the probabilistic approach developed
here to encompass classic metamorphosis as well; that is to say, to develop the
necessary theory in order to place a stochastic model on the state space
consisting of velocities and source functions and sample from function space.
This provides a derivative-free method of solving classic metamorphosis (or
other problems in shape analysis) at the expense of interpreting the results
probabilistically.
In this work, we only used a simple MCMC algorithm, but a Metropolis-adjusted 
Langevin algorithm or Hamiltonian Monte-Carlo algorithm may be more appropriate 
to solve this problem.

\subsection*{Acknowledgements}
AA acknowledges EPSRC funding through award EP/N014529/1 via
the EPSRC Centre for Mathematics of Precision Healthcare.

\bibliographystyle{abbrv}
\bibliography{select_mm}

\begin{thebibliography}{10}

\bibitem{allassonniere2007towards}
S.~Allassonni{\`e}re, Y.~Amit, and A.~Trouv{\'e}.
\newblock Towards a coherent statistical framework for dense deformable
  template estimation.
\newblock {\em Journal of the Royal Statistical Society: Series B (Statistical
  Methodology)}, 69(1):3--29, 2007.

\bibitem{allassonniere2008map}
S.~Allassonni{\`e}re, E.~Kuhn, and A.~Trouv{\'e}.
\newblock Map estimation of statistical deformable templates via nonlinear
  mixed effects models: Deterministic and stochastic approaches.
\newblock In {\em 2nd MICCAI Workshop on Mathematical Foundations of
  Computational Anatomy}, pages 80--91, 2008.

\bibitem{arnaudon_geometric_2017}
A.~Arnaudon, D.~D. Holm, and S.~Sommer.
\newblock A {Geometric} {Framework} for {Stochastic} {Shape} {Analysis}.
\newblock {\em Foundations of Computational Mathematics}, 2018.

\bibitem{arnold1966geometrie}
V.~I. Arnold.
\newblock Sur la g{\'e}om{\'e}trie diff{\'e}rentielle des groupes de lie de
  dimension infinie et ses applicationsa l’hydrodynamique des fluides
  parfaits.
\newblock {\em Ann. Inst. Fourier}, 16(1):319--361, 1966.

\bibitem{beg2005computing}
M.~F. Beg, M.~I. Miller, A.~Trouv{\'e}, and L.~Younes.
\newblock Computing large deformation metric mappings via geodesic flows of
  diffeomorphisms.
\newblock {\em {International Journal of Computer Vision}}, 61(2):139--157,
  2005.

\bibitem{cotter2013bayesian}
C.~J. Cotter, S.~L. Cotter, and F.-X. Vialard.
\newblock Bayesian data assimilation in shape registration.
\newblock {\em Inverse Problems}, 29(4):045011, 2013.

\bibitem{cotter2013mcmc}
S.~L. Cotter, G.~O. Roberts, A.~M. Stuart, and D.~White.
\newblock {MCMC} methods for functions: modifying old algorithms to make them
  faster.
\newblock {\em Statistical Science}, pages 424--446, 2013.

\bibitem{dashti2013map}
M.~Dashti, K.~J. Law, A.~M. Stuart, and J.~Voss.
\newblock {MAP} estimators and their consistency in bayesian nonparametric
  inverse problems.
\newblock {\em Inverse Problems}, 29(9):095017, 2013.

\bibitem{dashti2017bayesian}
M.~Dashti and A.~M. Stuart.
\newblock The {B}ayesian approach to inverse problems.
\newblock {\em Handbook of Uncertainty Quantification}, pages 311--428, 2017.

\bibitem{goriely2017mathematics}
A.~Goriely.
\newblock {\em The mathematics and mechanics of biological growth}, volume~45.
\newblock Springer, 2017.

\bibitem{grenander1994representations}
U.~Grenander and M.~I. Miller.
\newblock Representations of knowledge in complex systems.
\newblock {\em Journal of the Royal Statistical Society. Series B
  (Methodological)}, pages 549--603, 1994.

\bibitem{grenander1998computational}
U.~Grenander and M.~I. Miller.
\newblock Computational anatomy: {A}n emerging discipline.
\newblock {\em Quarterly of applied mathematics}, 56(4):617--694, 1998.

\bibitem{guo2006diffeomorphic}
H.~Guo, A.~Rangarajan, and S.~Joshi.
\newblock Diffeomorphic point matching.
\newblock In {\em Handbook of Mathematical Models in Computer Vision}, pages
  205--219. Springer, 2006.

\bibitem{hairer2014spectral}
M.~Hairer, A.~M. Stuart, S.~J. Vollmer, et~al.
\newblock Spectral gaps for a metropolis--hastings algorithm in infinite
  dimensions.
\newblock {\em The Annals of Applied Probability}, 24(6):2455--2490, 2014.

\bibitem{holm2009euler}
D.~Holm, A.~Trouv{\'e}, and L.~Younes.
\newblock The {E}uler-{P}oincar{\'e} theory of metamorphosis.
\newblock {\em Quarterly of Applied Mathematics}, 67(4):661--685, 2009.

\bibitem{kaltenmark2016geometrical}
I.~Kaltenmark.
\newblock {\em Geometrical Growth Models for Computational Anatomy}.
\newblock PhD thesis, Universit{\'e} Paris-Saclay, 2016.

\bibitem{kuhnel2017differential}
L.~K{\"u}hnel, A.~Arnaudon, and S.~Sommer.
\newblock Differential geometry and stochastic dynamics with deep learning
  numerics.
\newblock {\em arXiv preprint arXiv:1712.08364}, 2017.

\bibitem{kuhnel2017computational}
L.~K{\"u}hnel and S.~Sommer.
\newblock Computational anatomy in {T}heano.
\newblock In {\em Graphs in Biomedical Image Analysis, Computational Anatomy
  and Imaging Genetics}, pages 164--176. Springer, 2017.

\bibitem{miller2001group}
M.~I. Miller and L.~Younes.
\newblock Group actions, homeomorphisms, and matching: {A} general framework.
\newblock {\em International Journal of Computer Vision}, 41(1-2):61--84, 2001.

\bibitem{richardson2016metamorphosis}
C.~L. Richardson and L.~Younes.
\newblock Metamorphosis of images in reproducing kernel {H}ilbert spaces.
\newblock {\em Advances in Computational Mathematics}, 42(3):573--603, 2016.

\bibitem{schiratti2017bayesian}
J.-B. Schiratti, S.~Allassonni{\`e}re, O.~Colliot, and S.~Durrleman.
\newblock A {B}ayesian mixed-effects model to learn trajectories of changes
  from repeated manifold-valued observations.
\newblock {\em The Journal of Machine Learning Research}, 18(1):4840--4872,
  2017.

\bibitem{team2016theano}
T.~T.~D. Team, R.~Al-Rfou, G.~Alain, A.~Almahairi, C.~Angermueller,
  D.~Bahdanau, N.~Ballas, F.~Bastien, J.~Bayer, A.~Belikov, et~al.
\newblock Theano: A {P}ython framework for fast computation of mathematical
  expressions.
\newblock {\em arXiv preprint arXiv:1605.02688}, 2016.

\bibitem{trouve1995infinite}
A.~Trouv{\'e}.
\newblock An infinite dimensional group approach for physics based models in
  pattern recognition.
\newblock {\em {Preprint}}, 1995.

\bibitem{trouve1998diffeomorphisms}
A.~Trouv{\'e}.
\newblock Diffeomorphisms groups and pattern matching in image analysis.
\newblock {\em International Journal of Computer Vision}, 28(3):213--221, 1998.

\bibitem{trouve2005local}
A.~Trouv{\'e} and L.~Younes.
\newblock Local geometry of deformable templates.
\newblock {\em {SIAM Journal on Mathematical Analysis}}, 37(1):17--59, 2005.

\bibitem{trouve2005metamorphoses}
A.~Trouv{\'e} and L.~Younes.
\newblock {Metamorphoses through {L}ie group action}.
\newblock {\em Foundations of Computational Mathematics}, 5(2):173--198, 2005.

\bibitem{younes2010shapes}
L.~Younes.
\newblock {\em Shapes and diffeomorphisms}, volume 171.
\newblock Springer Science \& Business Media, 2010.

\end{thebibliography}
\end{document}